\documentclass[sigconf]{acmart}


\setcopyright{none}

\usepackage{subfigure} 
\usepackage{enumitem} 
\usepackage{booktabs} 
\usepackage{balance} 
\usepackage{pifont} 
\usepackage[linesnumbered,ruled]{algorithm2e} 
\usepackage{bm} 
\usepackage{breqn} 

\usepackage{colortbl}

\newcommand{\srijan}[1]{{\color{red}[Srijan: #1]}}

\newcommand{\hide}[1]{}
\newcommand{\cut}[1]{}

\newtheorem{conditions}{Condition}

\newcommand{\method}{\textit{JODIE}\xspace}
\newcommand{\batching}{\textit{t-Batch}\xspace}


\fancyhead{}

\begin{document}
\title{Learning Dynamic Embeddings from\\ Temporal Interaction Networks} 

\author{Srijan Kumar}
\affiliation{
	\institution{Stanford University, USA}
}
\email{srijan@cs.stanford.edu}
\author{Xikun Zhang}
\affiliation{
	\institution{University of Illinois, Urbana-Champaign, USA}
}
\email{xikunz2@illinois.edu}
\author{Jure Leskovec}
\affiliation{
	\institution{Stanford University, USA}
}
\email{jure@cs.stanford.edu}


\begin{abstract}
Modeling a sequence of interactions between users and items (e.g., products, posts, or courses) is crucial in domains such as e-commerce, social networking, and education to predict future interactions. Representation learning presents an attractive solution to model the dynamic evolution of user and item properties, where each user/item can be embedded in a euclidean space and its evolution can be modeled by dynamic changes in its embedding. 
However, existing embedding methods either generate static embeddings, treat users and items independently, or are not scalable.
  
Here we propose \method, a coupled recurrent model to jointly learn the dynamic embeddings of users and items from a sequence of user-item interactions.  
\method has three components. 
First, the \textit{update component} updates the user and item embedding from each interaction using their previous embeddings with the two mutually-recursive Recurrent Neural Networks. 
Second, a novel \textit{projection component} is trained to forecast the embedding of users at any future time. 
Finally, the \textit{prediction component} directly predicts the embedding of the item in a future interaction. 
For models that learn from a sequence of interactions, traditional batching of training data can not be done due to complex user-user dependencies. Therefore, we present a novel batching algorithm called \batching that generates time-consistent batches that can be run in parallel, leading to massive speed-up. 

We conduct six experiments to validate \method\ on two prediction tasks---future interaction prediction and state change prediction---using four real-world datasets. We show that \method outperforms six state-of-the-art algorithms in these tasks by up to 22.4\%. Moreover, we show that \method is highly scalable and up to 9.2$\times$ faster than comparable models. As an additional experiment, we illustrate that \method can predict student drop-out from courses up to five interactions in advance. 
\end{abstract}

\cut{
\begin{abstract}
  Modeling sequence of interactions between users and items (e.g., products, posts, courses) is crucial in domains such as e-commerce, social networks, and education. The behavior of both users and items evolve as they interact, and dynamic embeddings are a powerful way to model their evolving behavior. Most existing embedding models do not capture their mutual evolution or are not scalable.   
  
  Here we propose JODIE, a coupled recurrent model to jointly learn dynamic embeddings of users and items from a sequence of user-item interactions. For each interaction, the model proceeds in two stages: first in the prediction stage, JODIE estimates the user's embedding {\it before} the interaction and uses it to predict the item it interacts with; and second in the update stage, the item's embedding is used to update the user's embedding and vice-versa using two recurrent neural networks. We present a novel temporally-consistent batching algorithm to learn these embeddings simultaneously in a fast and efficient way. 

  We evaluate our proposed model on three prediction tasks---future interaction prediction, temporal label prediction, and anomaly detection---using real-world datasets. We show that JODIE outperforms several state-of-the-art algorithms in these tasks by up to 15\%. Moreover, we show that JODIE is highly scalable and up to 10 times faster than existing dynamic embedding models.  
\end{abstract}

}

\hide{
  Users interact sequentially with items over time in various domains, such as e-commerce, social, communication, and healthcare. As these interactions occur, the properties of both users and items evolve and influence one another, which further influence future interactions. Learning these dynamic representations of users and items is therefore essential for modeling them. Existing models either do not jointly learn embeddings of users and items, or are not scalable. 

  Here we propose a deep learning model called JODIE to jointly learn dynamic embeddings of users and items from a sequence of user-item interactions. JODIE proceeds in three stages after each interaction: update, project, and predict---in the update stage, old user and item embeddings are used to create updated user and item embeddings; project stage is used to estimate future embeddings, which are used to predict future interactions in the predict stage. JODIE is highly scalable and up to 10 times faster than existing joint learning models. 

  We evaluate JODIE on three different tasks---future interaction prediction, churn prediction, and anomaly detection---using several real-world datasets. We show that JODIE outperforms strong state-of-the-art algorithms in these tasks by up to 15\%. Finally, we illustrate that the learned embeddings are meaningful in showing early signs of churn and anomalousness. 
}


\maketitle

\vspace{-3mm}
\section{Introduction}

\begin{figure}[t]
\centering
        \includegraphics[width=0.7\columnwidth]{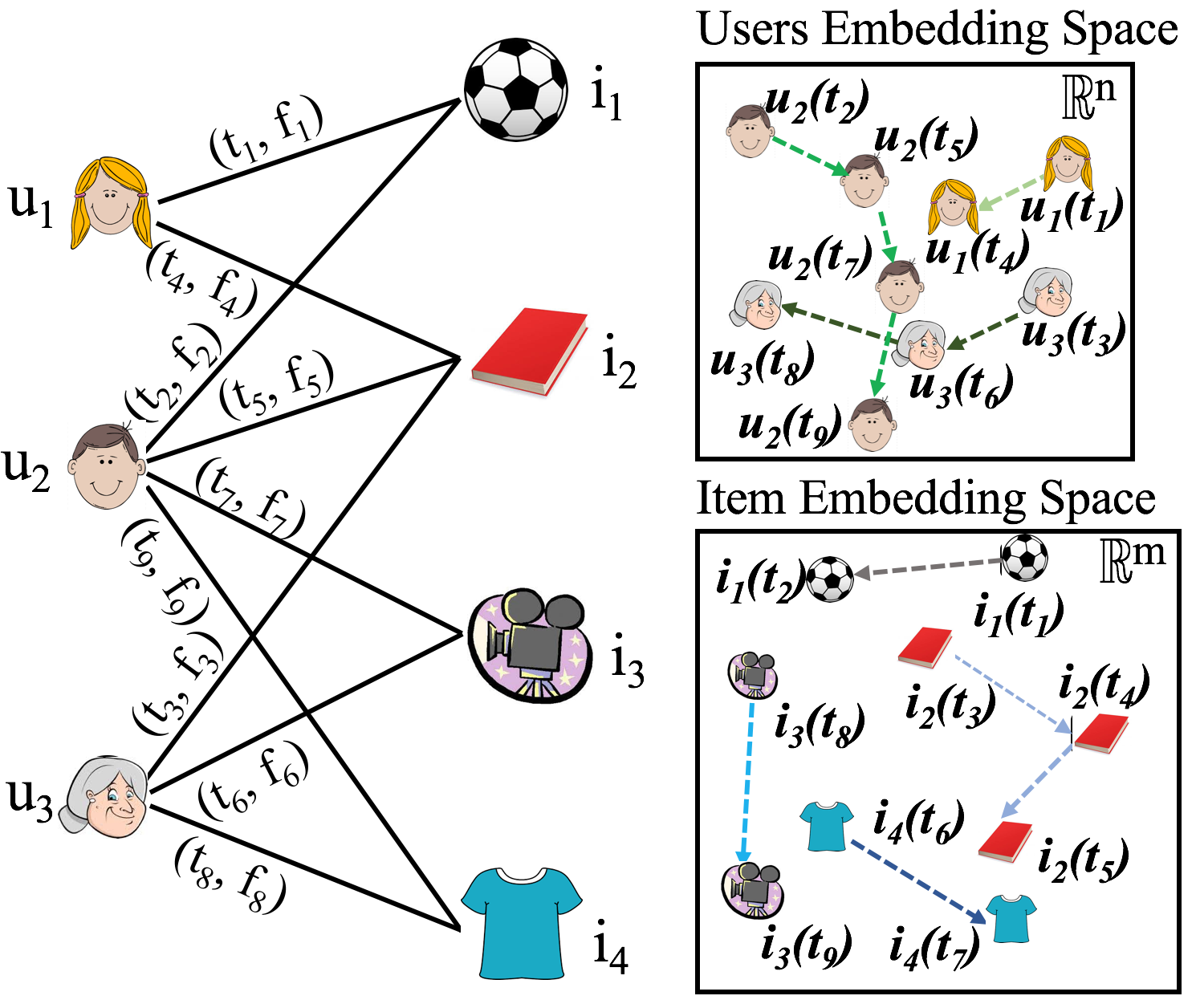}
    \vspace{-3mm}        
    \caption{Left: a network of three users $u_1, u_2$ and $u_3$ interacting with four items $i_1, i_2, i_3$ and $i_4$ over time. Each arrow represents an interaction with associated timestamp $t$ and a feature vector $f$. Right: resulting dynamic embeddings of users and items in the network. } 
    \vspace{-6mm}
    \label{fig:toy}
\end{figure}

Users interact sequentially with items in many domains such as e-commerce (e.g., a customer purchasing an item)~\cite{bobadilla2013recommender,zhang2017deep}, education (a student enrolling in a MOOC course)~\cite{liyanagunawardena2013moocs}, healthcare (a patient exhibiting a disease)~\cite{johnson2016mimic}, social networking (a user posting in a group in Reddit)~\cite{buntain2014identifying}, and collaborative platforms (an editor editing a Wikipedia article)~\cite{iba2010analyzing}. 
The same user may interact with different items over a period of time and these interactions dynamically change over time~\cite{DBLP:journals/debu/HamiltonYL17,DBLP:conf/recsys/PalovicsBKKF14,zhang2017deep,agrawal2014big,DBLP:conf/asunam/ArnouxTL17,raghavan2014modeling,DBLP:journals/corr/abs-1711-10967}. 
These interactions create a dynamic interaction network between users and items. 
Accurate real-time recommendation of items and predicting change in state of users over time are fundamental problems in these domains~\cite{DBLP:conf/wsdm/QiuDMLWT18,DBLP:conf/asunam/ArnouxTL17,DBLP:conf/sdm/LiDLLGZ14,DBLP:journals/corr/abs-1804-01465,DBLP:conf/cosn/SedhainSXKTC13,walker2015complex,DBLP:conf/icwsm/Junuthula0D18}. 
For instance, predicting when a student is likely to drop-out of a MOOC course 	is important to develop early intervention measures for their continued education~\cite{kloft2014predicting,yang2013turn,chaturvedi2014predicting}, and predicting when a user is likely to turn malicious on platforms, like Reddit and Wikipedia, is useful to ensure platform integrity~\cite{kumar2015vews,cheng2017anyone,ferraz2015rsc}. 

\cut{
Let us take an example of a sequence of interactions between users and items using Figure~\ref{fig:toy} (left).
The example has three users $u_1$, $u_2$, and $u_3$ that interact with four items $i_1$, $i_2$, $i_3$ and $i_4$ over time $t_1 - t_9$ (such that $t_i < t_j$ $\forall i < j$). 
Each interaction has an associated feature vector $f_i$, representing the properties of the interaction (e.g., the purchase amount or the number of items purchased). 
We will refer to this example network in the rest of the paper. 
}

Learning embeddings from dynamic user-item interaction networks poses three fundamental challenges. 
We illustrate this using an example interaction network between three users and four items shown in Figure~\ref{fig:toy} (left). 
\textbf{First}, as users interact with items, their properties evolve over time. 
For example, the interest of a user $u_3$ may gradually change from purchasing books (item $i_2$) to movies (item $i_3$) to clothes (item $i_4$). 
Similarly, the properties of items change as different users interact with them. 
For instance, a book (item $i_2$) that is popular in older people (user $u_3$ at time $t_3$) may eventually become popular among the younger audience (users $u_1$ and $u_2$ at times $t_4$ and $t_5$). 
\textbf{Second}, a user's property is influenced by the property of the item that it interacts with and conversely, an item's property is influenced by the interacting user's property. 
For instance, if $u_2$ purchases a book (item $i_3$) after it has won a Pulitzer Prize reflects a different behavior than if $u_2$ purchases $i_3$ before the prize. 
\textbf{Third}, interactions with common items create complex user-to-user dependencies. 
For example, users $u_1$ and $u_2$ interact with item $i_1$, so both users influence each other's properties. 
The traditional methods of batching data during training treat all users independently and therefore they can not be applied to learn embeddings. 
As a result, existing embedding methods have to process the interactions one-at-a-time, and therefore, they are not scalable to a large number of interactions. 
Therefore, modeling the \textit{jointly evolving embeddings of users and items in a scalable way} is crucial in making accurate predictions. 

Representation learning, or learning low-dimensional embeddings of entities, is a powerful approach to represent the dynamic evolution of users and item properties~\cite{DBLP:journals/kbs/GoyalF18,zhang2017deep,dai2016deep,DBLP:conf/nips/FarajtabarWGLZS15,beutel2018latent,zhou2018dynamic}. 
Existing representation learning algorithms, including random-walk based methods~\cite{nguyen2018continuous,perozzi2014deepwalk,grover2016node2vec}, dynamic network embedding methods~\cite{zhou2018dynamic,zhang2017learning}, and recurrent neural network-based algorithm~\cite{wu2017recurrent,beutel2018latent,zhu2017next}, either generate static embeddings from dynamic interactions, learn embeddings of users only, treat users and items independently, or are not scalable to a large number of interactions. 

\textbf{Present work.} 
Here we address the following problem: 
\textbf{given} a sequence of temporal interactions $\mathcal{S}: \mathcal{S}_j = (u_j, i_j, f_j, t_j)$ between users $u_j \in \mathcal{U}$ and items $i_j \in \mathcal{I}$ with a feature vector of the interaction $f_j$ at time $t_j$, 
\textbf{generate} dynamic embeddings \bm{$u_j(t)$} and \bm{$i_j(t)$} for users and items at any time $t$, such that they allow us to solve two prediction tasks: future interaction prediction and user state change prediction. 

\textbf{Present work (\method\ model):}
Here we present an algorithm, \method, which learns dynamic embeddings of users and items from temporal user-item interactions.\footnote{\method\ stands for \underline{Jo}int \underline{D}ynamic User-\underline{I}tem \underline{E}mbeddings.}
Each interaction has an associated timestamp $t$ and a feature vector $f$, representing the properties of the interaction (e.g., the purchase amount or the number of items purchased).  
The resulting user and item embeddings for the example network are illustrated in Figure~\ref{fig:toy} (right). 
We see that \method\ updates the user and item embeddings after every interaction, thus resulting in a dynamic embedding trajectory for each user and item. 
\method\ overcomes the shortcomings of the existing algorithms, as shown in Table~\ref{tab:related}.

In \method, each user and item has two embeddings: a static embedding and a dynamic embedding. 
The static embedding represents the entity's long-term stationary property, while the dynamic embedding represent evolving property and are learned using the \method\ algorithm. 
This enables \method\ to make predictions from both the temporary and stationary properties of the user.

The \method model consists of three major components in its architecture: an update function, a project function, and a predict function. 

The \textit{update function} of \method\ has two Recurrent Neural Networks (RNNs) to generate the dynamic user and item embeddings. 
Crucially, the two RNNs are coupled to explicitly incorporate the interdependency between the users and the items. 
After each interaction, the user RNN updates the user embedding by using the embedding of the interacting item. Similarly, the item RNN uses the user embedding to update the item embedding. 
It should be noted that \method\ is easily extendable to multiple types of entities, by training one RNN for each entity type. 
In this work, we apply \method\ to the case of bipartite interactions between users and items.

A major innovation of \method\ is that it learns a \textit{project function} to forecast the embedding of users at any future time. 
Intuitively, the embedding of a user will change slightly after a short time elapses since its previous interaction (with any item), but the embedding can change significantly after a long time elapses.
As a result, the embedding of the user needs to be estimated for accurate real-time predictions. 
To solve this challenge, \method\ learns a project function that estimates the embedding of a user after some time $\Delta$ elapses since its previous interaction. 
This function makes \method\ truly dynamic as it can generate dynamic user embeddings at any time. 

Finally, the third component of \method\ is the \textit{predict function} that predicts the future interaction of a user. 
An important design choice here is that the function directly outputs the embedding of the item that a user is most likely to interact with, instead of a probability score of interaction between a user and an item. 
As a result, we only need to do the expensive neural network forward pass \textit{once} in our model to generate the predicted item embedding and then find the item that has the embedding closest to the predicted embedding. 
On the other hand, existing models need to do the expensive forward pass $N$ times (once for each candidate item) and select the one with the highest score, which hampers its scalability.

\textbf{Present work (\batching\ algorithm):}
Training models that learn on a sequence of interactions is challenging due to two reasons: (i) interactions with common items results in complex user-to-user dependencies, and (ii) the interactions should be processed in increasing order of their time. 
The naive solution to generate dynamic embeddings is to process each interaction sequentially, which is not scalable to a large number of interactions such as in DeepCoevolve~\cite{dai2016deep} and Zhang et al.~\cite{zhang2017learning}. 
Therefore, we propose a novel batching algorithm, called \batching, that creates batches such that the interactions in each batch can be processed in parallel while still maintaining all user-to-user dependencies. 
Each user and item appears at most once in every batch, and the temporally-sorted interactions of each user (and item) appear in monotonically increasing batches. 
Batching in such a way results in massive parallelization. 
\batching is a general algorithm that is applicable to any model that learns on a sequence of interactions. 
We experimentally validate that \batching leads to a 8.5$\times$ and $7.4\times$ speed-up in the training time of \method\ and DeepCoevolve~\cite{dai2016deep}. 

\textbf{Present work (experiments):}
We conduct six experiments to evaluate the performance of \method\ on two tasks: predicting the next interaction of a user and predicting the change in state of users (when a user will be banned from social platforms and when a student will drop out from a MOOC course). 
We use four datasets from Reddit, Wikipedia, LastFM, and a MOOC course activity for our experiments. 
We compare \method\ with six state-of-the-art algorithms from three categories: recurrent recommender algorithms~\cite{zhu2017next,beutel2018latent,wu2017recurrent}, dynamic node embedding algorithm~\cite{nguyen2018continuous}, and co-evolutionary algorithm~\cite{dai2016deep}. 
\method\ outperforms the best baseline algorithms on the interaction prediction task by up to 22.4\% and up to 4.5\% in predicting user state change. 
We further show that \method outperforms existing algorithms irrespective of the percentage of the training data and the size of the embeddings. 
As an additional experiment, we show that \method\ can predict which student will drop-out of a MOOC course as early as five interactions in advance. 

Overall, in this paper, we make the following contributions:
\vspace{-2mm} 
\begin{itemize}
\item \textbf{Embedding algorithm:} We propose a coupled Recurrent Neural Network model called \method\ to learn dynamic embeddings of users and items from a sequence of temporal interactions. A major contribution of \method\ is that it learns a function to project the user embeddings to any future time. 
\item \textbf{Batching algorithm:} We propose a novel \batching algorithm to create batches that can be run in parallel without losing user-to-user dependencies. This batching technique leads to 8.5$\times$ speed-up in \method\ and 7.4$\times$ speed-up in DeepCoevolve. 
\item \textbf{Effectiveness:} \method\ outperforms six state-of-the-art algorithms in predicting future interactions and user state change predictions, by performing up to 22.4\% better than six state-of-the-art algorithms. 
\end{itemize}
\vspace{-2mm}

\cut{
\method\ does this by learning two transformation function $\gamma_U$ and $\gamma_I$, that takes the existing embeddings as input and generates the updated embedding as its output. 
$\gamma_U$ updates 
Naturally, these functions are mutually recursive. 
This addresses the challenge of learning embeddings jointly. 
In \method, we create a coupled recurrent neural model (RNN) to represent these two update functions. 
The model works as follows: 
Whenever a user $u$ interacts with an item $i$, their previous embeddings are used as inputs to both the RNNs which updates the embeddings of both $u$ and $i$. 
Our \method\ dynamic embedding learning method is inspired by the popular Kalman Filtering algorithm~\cite{julier1997new}.\footnote{Kalman filtering is used to accurately measure the state of a system using a combination of system observations and state estimates generated using the laws of the system.}
\method\ works as follows: whenever a user interacts with an item (i.e., an observation), we can use it to \textit{update} the embeddings (i.e., the new state) of both the user and the item. 
Between two consecutive interactions of a user (or item), its embedding needs to be \textit{estimated}. 
This estimated embedding is used to make predictions about the user (or item) at the time, e.g., which item will the user interact with. 
When the user interacts with the next item, the new embeddings are generated and the process is repeated. 
We show this in Figure~\ref{} \srijan{add a figure of user embedding projection and update operation intuition}. 
\srijan{explain figure}. 

Modeling how each user interacts with items over time is crucial in two tasks: (i) making efficient real-time recommendations, and (ii) predicting temporal labels for users/change in behavior of users. 
Real-time recommendations, i.e., predicting the item a user will interact with at the current instance, is important to build practical recommender systems~\cite{}. 
Moreover, the sequence of interactions of a user are strong indicators of change in their behavior~\cite{}. 
For instance, a student's behavior on a MOOC course may change shortly before dropping out, or an editor may start making troll edits on Wikipedia shortly before getting banned. 
Making early predictions of changing user behavior can be used to develop early intervention measures, e.g., to prevent the student from dropping out, or to prevent a user turning into a troll. 
Therefore, in this work, we address the problem of making temporal predictions (recommendations and labels) from a sequence of user-item interactions. 

}


\cut{
Existing methods, such as Time-LSTM~\cite{zhu2017next}, RRN~\cite{wu2017recurrent}, LatentCross~\cite{beutel2018latent}, learn dynamic embeddings of users but treat every user's interactions independently, and as a result, these methods are unable to learn from actions of other users while making predictions. 
\srijan{Show this using a small toy example: insert a popular item where every user connects to it that day; LSTM would not be able to make this prediction for a test user, while our model would be able to do so as it learns across users.} 
Random walk methods do not directly generate dynamic embeddings
be recomputed from scratch after every new interaction to generate dynamic embedding as they are typically not incremental. This is very expensive and not practical.
Models that are based on matrix and tensor factorization do not generate dynamic embeddings. 
The closest to our method is DeepCoevolve~\cite{dai2016deep}, which also learns dynamic embeddings of users and items from sequential interactions. However, their embeddings can be unstable (change too much over time), not scalable due to computation of negative-sampling based survival function at every step, 
parametric?
}

\cut{
We first explain why we need to learn dynamic embeddings of users and items \textit{jointly}, instead of independently. 
Existing methods such as RRN~\cite{wu2017recurrent} use two LSTMs to learn dynamic embeddings with one-hot vectors---one LSTM is to update users and the other is for items. 
Consider an interaction where a user $u$ interacts with an item $i$. 
In RRN, when updating item $i$'s embedding, the one-hot vector corresponding to user $u$ is used. 
As a result, what $u$ has done in the recent or distant past does not influence $i$'s embedding, thereby ignoring the context of the interaction. 
Conversely, using the one-hot vector of $i$ to update user $u$'s embedding misses the context about which users have interacted with $i$ in the past. 
Thus, the two LSTMs update the embeddings independently, without direct mutual influence from each other's the past interactions. 
Similar is the case with algorithms like TA-LSTM~\cite{baytas2017patient} and T-LSTM~\cite{zhu2017next} that use one-hot item vectors to learn user embeddings only. \srijan{change to problem with using one-hot and then give user example, and then extend to rnn two lstm case.}
}

\cut{
Our model is (i) joint: it learns the embeddings of users and items mutually-recursively, and (ii) dynamic: it generates embeddings at every step, instead of one embedding at the very end.
}



\section{Related Work}
\label{sec:related}

Here we discuss the research closest to our problem setting spanning three broad areas. 
Table~\ref{tab:related} compares their differences. 
Any algorithm that learns from sequence of interactions should have the following properties: it should be able to learn dynamic embeddings, for both users and items, in such a way that they are inter-dependent, and the method should be scalable. 
The proposed model \method\ satisfies all the desirable properties. 

\textbf{Deep recommender systems.}
Several recent models employ recurrent neural networks (RNNs) and variants (LSTMs and GRUs) to build recommender systems. 
RRN~\cite{wu2017recurrent} uses RNNs to generate dynamic user and item embeddings from rating networks.  
Recent methods, such as Time-LSTM~\cite{zhu2017next} and LatentCross~\cite{beutel2018latent} incorporate features directly into their model.
However, these methods suffer from two major shortcomings. First, they take one-hot vector of the item as input to update the user embedding.
This only incorporates the item id and ignores the item's current state. 
The second shortcoming is that models such as Time-LSTM and LatentCross generate embeddings only for users, and not for items. 

\method\ overcomes these shortcomings by learning dynamic embeddings for both users and items in a mutually-recursive manner. 
In doing so, \method\ outperforms the best baseline algorithm by up to 22.4\%.

\textbf{Dynamic co-evolution models.} 
Methods that jointly learn representations of users and items have recently been developed using point-process modeling~\cite{wang2016coevolutionary,trivedi2017know} and RNN-based modeling~\cite{dai2016deep}. 
The basic idea behind these models is similar to \method ---user and item embeddings influence each other whenever they interact.
However, the major difference between \method\ and these models is that \method\ learns a projection function to generate the embedding of the entities whenever they are involved in the interaction, while the \textit{projection function} in \method\ enables us to generate an embedding of the user at any time. As a result, we observe that \method\ outperforms DeepCoevolve by up to 57.7\% in both prediction tasks of next interaction prediction and state change prediction. 

In addition, these models are not scalable as traditional methods of data batching during training can not be applied due to complex user-to-user dependencies. 
\method\ overcomes this limitation by developing a novel batching algorithm, \batching, which makes \method\ 9.2$\times$ faster than DeepCoevolve.

\textbf{Temporal network embedding models.}  
Several models have recently been developed that generate embeddings for the nodes (users and items) in temporal networks. 
CTDNE~\cite{nguyen2018continuous} is a state-of-the-art algorithm that generates embeddings using temporally-increasing random walks, but it generates one final static embedding of the nodes, instead of dynamic embeddings. 
Similarly, IGE~\cite{zhang2017learning} generates one final embedding of users and items from interaction graphs. 
Therefore, both these methods (CTDNE and IGE) need to be re-run for every a new edge to create dynamic embeddings. 
Another recent algorithm, DynamicTriad~\cite{zhou2018dynamic} learns dynamic embeddings but does not work on interaction networks as it requires the presence of triads. 
Other recent algorithms such as DDNE~\cite{DBLP:journals/access/LiZYZY18}, DANE~\cite{DBLP:conf/cikm/LiDHTCL17}, DynGem~\cite{goyal2018dyngem}, Zhu et al.~\cite{zhu2016scalable}, and Rahman et al.~\cite{DBLP:journals/corr/abs-1804-05755} learn embeddings from a sequence of graph snapshots, which is not applicable to our setting of continuous interaction data. 
%
Recent models such as NP-GLM model~\cite{DBLP:journals/corr/abs-1710-00818}, DGNN~\cite{DBLP:journals/corr/abs-1810-10627}, and DyRep~\cite{trivedi2018representation} learn embeddings from persistent links between nodes, which do not exist in interaction networks as the edges represent instantaneous interactions.

Our proposed model, \method\, overcomes these shortcomings by generating dynamic user and item embeddings. In doing so, \method\ also learns a \textit{projection function} to predict the user embedding at a future time point. Moreover, for scalability during training, we propose an efficient training data batching algorithm that enables learning from large-scale interaction data. 
%


{
\footnotesize
\begin{table}[t]
\center
\caption{\label{tab:related} Table comparing the desired properties of the existing class of algorithms and our proposed \method\ algorithm. \method\ satisfies all the desirable properties.}
\begin{tabular}{c|c|c|c|c|c|c|c||c}
\hline 
& \multicolumn{4}{c|}{Recurrent} & \multicolumn{3}{c||}{Temporal network} & Proposed \\
& \multicolumn{4}{c|}{models} & \multicolumn{3}{c||}{embedding models} & model \\\cline{2-9}
Property & \rotatebox[origin=c]{90}{LSTM} & \rotatebox[origin=c]{90}{Time-LSTM~\cite{zhu2017next}} & \rotatebox[origin=c]{90}{RRN~\cite{wu2017recurrent}} & \rotatebox[origin=c]{90}{LatentCross~\cite{beutel2018latent}} & \rotatebox[origin=c]{90}{CTDNE~\cite{nguyen2018continuous}} & \rotatebox[origin=c]{90}{IGE~\cite{zhang2017learning}} & \rotatebox[origin=c]{90}{DeepCoevolve~\cite{dai2016deep}} & \rotatebox[origin=c]{90}{\method} \\\hline
Dynamic embeddings & \ding{52} & \ding{52} & \ding{52} & \ding{52} & & & \ding{52} & \ding{52} \\
Embeddings for users and items & & & \ding{52} & & \ding{52}  & \ding{52} & \ding{52} & \ding{52} \\
Learns joint embeddings & & & & & & \ding{52} & \ding{52} & \ding{52} \\
Parallelizable/Scalable & \ding{52}  &  \ding{52} &  \ding{52} & \ding{52}  &\ding{52} & & & \ding{52} \\\hline 
\end{tabular}
\end{table}
}

\cut{
We first explain why we need to learn dynamic embeddings of users and items \textit{jointly}, instead of independently. 
Existing methods such as RRN~\cite{wu2017recurrent} use two LSTMs to learn dynamic embeddings with one-hot vectors---one LSTM is to update users and the other is for items. 
Consider an interaction where a user $u$ interacts with an item $i$. 
In RRN, when updating item $i$'s embedding, the one-hot vector corresponding to user $u$ is used. 
As a result, what $u$ has done in the recent or distant past does not influence $i$'s embedding, thereby ignoring the context of the interaction. 
Conversely, using the one-hot vector of $i$ to update user $u$'s embedding misses the context about which users have interacted with $i$ in the past. 
Thus, the two LSTMs update the embeddings independently, without direct mutual influence from each other's the past interactions. 
Similar is the case with algorithms like TA-LSTM~\cite{baytas2017patient} and T-LSTM~\cite{zhu2017next} that use one-hot item vectors to learn user embeddings only. \srijan{change to problem with using one-hot and then give user example, and then extend to rnn two lstm case.}
}


\section{JODIE: Joint Dynamic User-Item Embedding Model}
\label{sec:method}

In this section, we propose \method, a method to learn dynamic representations of users and items from a sequence of temporal user-item interactions $\mathcal{S}: \mathcal{S}_j = (u_j, i_j, f_j, t_j)$. 
An interaction $S_j$ happens between a user $u_j \in \mathcal{U}$ and an item $i_j \in \mathcal{I}$ at time $t_j$. 
Each interaction has an associated feature vector $f_j$. 
The desired output is to generate dynamic embeddings $\bm{u_j(t)}$ for user $u_j$ and $\bm{i_j(t)}$ for item $i_j$  at any time $t$. 
Table~\ref{tab:symbols} lists the symbols used.


Our proposed model, called \method\, is a dynamic embedding learning method that is reminiscent of the popular Kalman Filtering algorithm~\cite{julier1997new}.\footnote{Kalman filtering is used to accurately measure the state of a system using a combination of system observations and state estimates given by the laws of the system.} 
Like the Kalman filter, \method\ uses the interactions (i.e., observations) to update the state of the interacting entities (users and items) via a trained \textit{update function}. 
A major innovation in \method\ is that between two observations of a user, its state is estimated by a trained \textit{projection function} that uses its previous observed state and the elapsed time to generate a \textit{projected}
When the entity's next interaction is observed, its new states are updated again. 

The \method\ model is trained to accurately predict future interactions between users and items. 
Instead of predicting a probability score of interaction between a user and item, \method\ trains a \textit{predict function} to directly output the embedding of the predicted item that a user will interact with. 
This has the advantage that it \method\ only needs to do one forward pass during inference to generate the item embedding, as opposed to $|\mathcal{I}|$ times (once for each candidate item). 
We illustrate the three major operations of \method\ in Figure~\ref{fig:model}. 


\cut{
More precisely, whenever a user $u$ interacts with an item $i$ at time $t$, we update the embeddings of both the user and the item, represented as $\bm{u(t)}$ and $\bm{i(t)}$, respectively. \method\ learns both embeddings mutually recursively using two Recurrent Neural Networks. 
After some (given) time $\Delta$ elapses since $u$'s previous interaction, $u$'s embedding at time $t + \Delta$ is projected using a learnable \textit{projection function} to estimate its state, represented as $\bm{\widehat{u}(t+\Delta)}$. 
This estimated embedding is used to \textit{predict} the embedding of item $j$ that $u$ interacts with at time $t+ \Delta$. 
After the next interaction happens, the embeddings of the user and item are updated, and this entire process is repeated. 
}

%




\textbf{Static and Dynamic Embeddings.}
In \method , each user and item is assigned two types of embeddings: a static embedding and a dynamic embedding. 

Static embeddings, $\bm{\overline{u}} \in \mathbb{R}^d \text{ } \forall u \in \mathcal{U}$ and $\bm{\overline{i}} \in \mathbb{R}^d \text{ } \forall i \in \mathcal{I}$, do not change over time. 
These are used to express stationary properties such as the long-term interest of users. 
We use one-hot vectors as static embeddings of all users and items, as advised in Time-LSTM~\cite{zhu2017next} and TimeAware-LSTM~\cite{baytas2017patient}. 

On the other hand, each user $u$ and item $i$ is assigned a dynamic embedding represented as $\bm{u(t)} \in \mathbb{R}^n$ and $\bm{i(t)} \in \mathbb{R}^m$ at time $t$, respectively. 
These embeddings change over time to model their evolving behavior. 
The embeddings of a user and item are updated whenever they are involved in an interaction. 

In \method , use both the static and dynamic embeddings to train the model to predict user-item interactions in order to leverage both the long-term and dynamic properties.


\begin{figure}[t]
\centering
        \includegraphics[width=0.8\columnwidth]{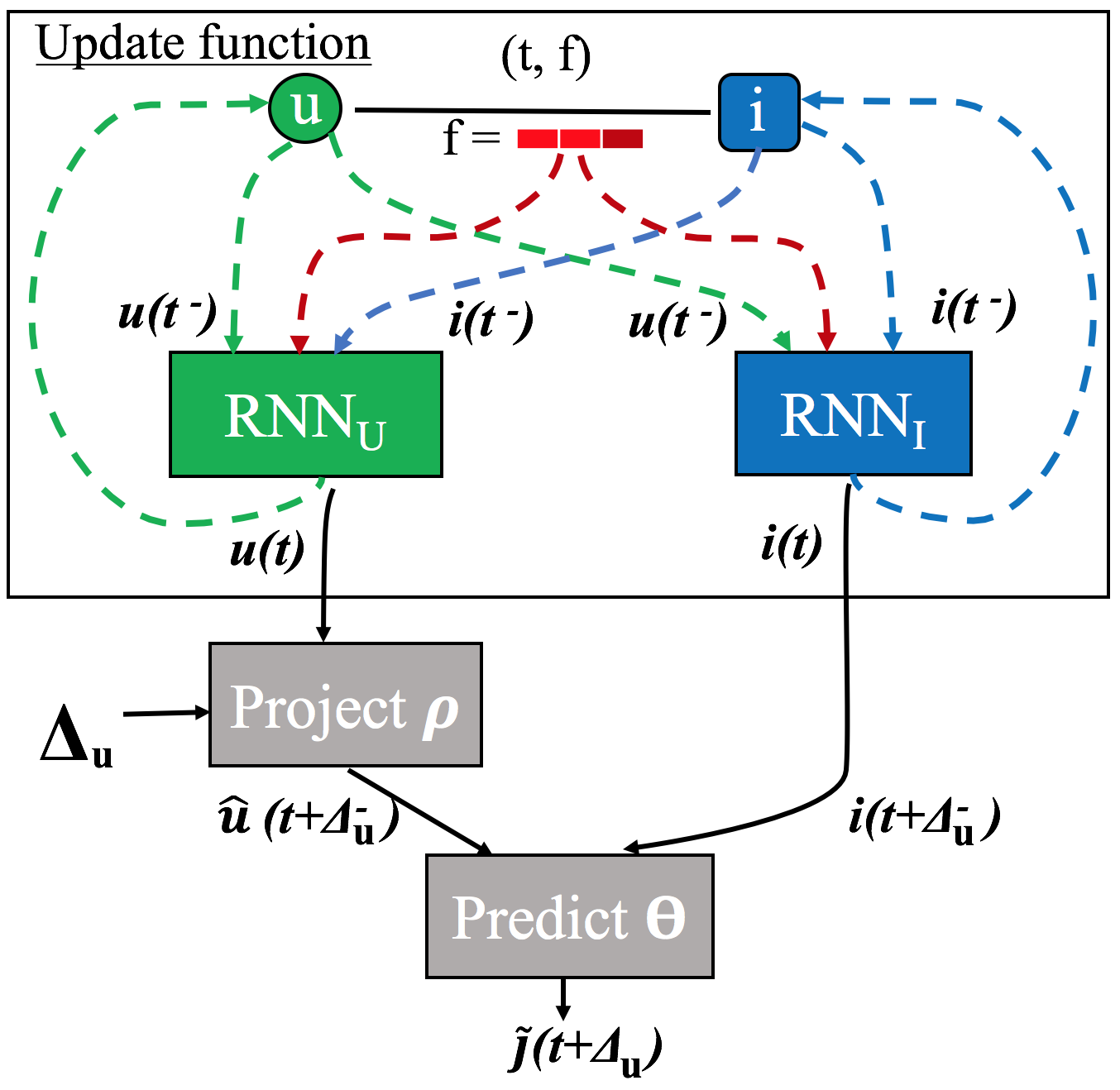}
    \caption{\textbf{Architecture of \method :} After an interaction $S = (u, i, t, f)$ between user $u$ and item $i$, the dynamic embeddings of $u$ and $i$ are updated in the update function with $RNN_U$ and $RNN_I$, respectively. To predict user $u$'s interaction at time $t+\Delta_u$, the user embedding is projected, $\bm{\hat{u}(t+\Delta_u)}$, using the project function $\rho$. This is used to generate the embedding $\bm{\tilde{j}(t+\Delta_u)}$ of the predicted item $j$. 
    \label{fig:model}}
\end{figure}

\subsection{Learning dynamic embeddings with \textit{JODIE}} 

Here we propose a mutually-recursive Recurrent Neural Network based model that learns dynamic embeddings of both users and items jointly. 
We will explain the three major components of the algorithm: update, project, and predict. 
Algorithm~\ref{alg:jodie} shows the algorithm for each epoch.

\cut{
Let us say that a user $u$ interacts with an item $j$ at time $t$, and $u$'s previous interaction was with item $i$ at some time $t'$. 
At every time $t$, \method\ works in three phases: (i) project phase, where user $u$'s embedding from time $t'$ is used to estimate its embedding at time $t$, (ii) predict phase, where $u$'s estimated embedding and item $i$'s current embedding are used to predict item $j$, and (iii) update phase, where $u$'s and $j$'s embeddings are updated using their previous embeddings. 
This process is shown in Figure~\ref{fig:model}. 
\method\ works iteratively in these three phases at every interaction. 
Next, we will explain the details of the three steps. 

(i) update phase, where the embedding of the user and item are updated after the interaction, (ii) project phase, where the user's previous embedding is projected to a current time step, and (iii) predict phase, where the projected embedding of the user and the current embedding of the previous item is used to predict the embedding of the interacting item. 

This way we directly incorporate the context of the interaction and the past behavior of user (item, resp.) influences the embedding of the interacting item (user, resp.). 
Figure~\ref{} shows the difference between the two using a graphical model. 
\srijan{draw a graphical model to explain dependencies, similar to the one done in Fig 3 in RRN}. 
}


\subsubsection{\textbf{Update operation using a coupled recurrent model.}} 
\label{sec:update}
In the update operation, the interaction $S = (u, i, t, f)$ between a user $u$ and item $i$ is used to update both their dynamic embeddings. 
Our model uses two separate recurrent neural networks for updates---$RNN_U$ is shared across all users and used to update user embeddings, and $RNN_I$ is shared among all items to update item embeddings. 
The state of the user RNN and item RNN represent the user and item embeddings, respectively.

When user $u$ interacts with item $i$, $RNN_U$ updates the embedding $\bm{u(t)}$ by using the embedding of item $i$ as an input. 
This is in stark contrast to the popular use of items' one-hot vectors to update user embeddings~\cite{beutel2018latent,wu2017recurrent,zhu2017next}, which makes these models infeasible as they scale only to a small number of items due to space complexity. 
Instead, we use the dynamic embedding of an item as it contains more information than just the item's `id', including its current state and its recent interactions with (any) user. Therefore, the use of item's dynamic embeddings can generate more meaningful dynamic user embeddings. 
For the same reason, $RNN_I$ uses the dynamic user embedding to update the dynamic embedding of the item $i$. 
This results in mutually recursive dependency between the embeddings. 
Figure~\ref{fig:model} shows this in the ``update function'' block. 

Formally, 
$$ \bm{u(t)} =  RNN_U(\bm{u(t^-)}, \bm{i(t^-)}, \Delta_u, f) $$ 
$$ \bm{i(t)} =  RNN_I(\bm{i(t^-)}, \bm{u(t^-)}, \Delta_i, f) $$ 
where $\bm{u(t^-)}$ and $\bm{i(t^-)}$ represent the user and item embeddings before the interaction (i.e., those obtained after their previous interaction updates). 
$\Delta_u$ and $\Delta_i$ represent the time elapsed since $u$'s previous interaction (with any item) and $i$'s previous interaction (with any user), respectively, and are used as input to account for their frequency of interaction. 
Incorporating time has been shown to be useful in prior work~\cite{dai2016deep,wu2017recurrent,beutel2018latent,zhang2017deep}. 
The interaction feature vector $f$ is also used as an input. 
The input vectors are all concatenated and fed into the RNNs. 
Variants of RNNs, such as LSTM and GRU, gave empirically similar performance in our experiments, so we use an RNN to reduce the number of trainable parameters.


{
\begin{table}
\caption{\label{tab:symbols} Table of symbols used in this paper.}
\begin{tabular}{|c|l|}
\hline
Symbol & Meaning \\\hline
$\bm{u(t)}$ and $\bm{i(t)}$ & Dynamic embedding of user $u$ and item $i$ at time $t$ \\
$\bm{u(t^-)}$ and $\bm{i(t^-)}$ & Dynamic embedding of user $u$ and item $i$ before time $t$ \\
$\bm{\overline{u}}$ and $\bm{\overline{i}}$ & Static embedding of user $u$ and item $i$ \\
$[\bm{\overline{u}}, \bm{u(t)}]$ and $[\bm{\overline{i}}, \bm{i(t)}]$ & Complete embedding of user $u$ and item $i$ a time $t$\\ 
$RNN_U$ and $RNN_I$ & User RNN and item RNN to update embeddings\\
$\rho$ & Embedding projection function \\
$\Theta$  & Prediction function to output \\
$\bm{\widehat{u}(t)}$ and $\bm{\widehat{i}(t)}$ & Projected embedding of user $u$ and item $i$ at time $t$ \\
$\bm{\widetilde{i}(t)}$ & Predicted item embedding  \\
$f_j$ & Feature at interaction $\mathcal{S}_j$ \\\hline
\end{tabular}
\end{table}
}
\cut{
\subsubsection{Update phase using coupled recurrent model.}
After each interaction $S = (u, i, t, f)$, the update operations generate new dynamic embeddings $\mathcal{D}_U(u^t)$ and $\mathcal{D}_I(i^t)$ of the user $u$ and item $i$. 


Our model uses two recurrent neural networks to generate these embeddings---one RNN, which we name $RNN_U$, is shared across all users and used to update user embeddings, and another RNN, named $RNN_I$, shared across all items, to update item embeddings. 

The parameters of both the RNNs are trained to make accurate interaction predictions between users and items. We will elaborate on the training process in the next section. 
In practice, the state of the RNNs represent the embeddings. 
When user $u$ interacts with item $i$, the RNNs takes as inputs $u$'s and $i$'s previous embeddings, the time elapsed since their previous interaction with any user or item, and the features $f$ of the current interaction. 


Formally, 
$$ \mathcal{D}_U(u^{t}) =  RNN_U(\mathcal{D}_U(u^t)^-, \mathcal{D}_I(i^t)^-, \Delta_u, f_j) $$ 
$$ \mathcal{D}_I(i^{t}) =  RNN_I(\mathcal{D}_I(i^t)^-, \mathcal{D}_U(u^t)^-, \Delta_i, f_j) $$ 
where $\mathcal{D}_U(u^{t})^-$ and $\mathcal{D}_I(i^t)^-$ represent the user and item embedding right before the interaction, defined below. 

$RNN_U$ works as follows: 
Here $\mathcal{D}_U(u^{t})^-$ is computed as a function of the user embedding after its previous interaction (with any item) and the elapsed time $\Delta_u$ since then. 
We use the formulation suggested in LatentCross~\cite{latentcross} to incorporate time into embeddings as follows: a context vector is created using $\Delta_u$ which is element-wise multiplied with the embedding vector. 
This has been shown to be better than simple concatenation. 
$\mathcal{D}_U(u^{t})^-$ is used to initialize the state of the RNN, and the concatenated vector of the other terms $ \mathcal{D}_I(i^{t})^-, \Delta_u,$ and $f_j$ is used as the input to generate the new state. 

The item RNN $RNN_I$ works analogously. 
The resulting neural network architecture is shown in Figure~\ref{}. 


Here we use a simple RNN, but other variants such as LSTMs and GRUs can be used instead, though our experiments did not show an increase in prediction performance. 
}


\subsubsection{\textbf{Embedding projection operation.}}
\label{sec:project}
Between two interactions of a user, its embedding may become stale as time more time elapses. Using stale embeddings lead to sub-optimal predictions and therefore, it is crucial to estimate the embeddings in real-time. To address this, we create a novel \textit{projection operation} that estimates the embedding of a user after some time $\Delta$ elapses since its previous interaction. 

In practice, consider the scenario when a recommendation needs to be made to a user when it logs into a system. 
For example, on an e-commerce website, if a user returns 5 minutes after a previous purchase, then its projected embedding would be close to its previous embedding. On the other hand, the projected embedding would drift farther if the user returned 10 days later. The use of projected embedding enables \method\ to make different recommendations to the same user at different points in time. 
Therefore, the instantaneous projected embedding of a user can be utilized to make efficient real-time recommendations. The projection operation is one of the major innovations of \method. 

The projection function $\rho: \mathbb{R}^n \times \mathbb{R} \rightarrow \mathbb{R}^n$ projects the embedding of a user after time $\Delta_u$ has elapsed since its previous interaction at time $t$. 
We represent the projected user embedding at time $t + \Delta_u$ as $\bm{\widehat{u}(t+\Delta_u)}$. 

\cut{
The embedding projection operation of \method\ are trained to generate an embedding of a user at time $t + \Delta_u$, where $t$ is the time of its previous interaction. 
 i.e., right before a user $u$ interacts with item $j$, the model should be able to correctly predict `$j$' (e.g., consider predicting which item the user will purchase when it logs in). 
As time $\Delta_u$ has elapsed since $u$'s previous interaction, the state of the user may have changed. 
Therefore, 
}

The two inputs to the projection operation are $u$'s previous embedding at time $t$ and the value $\Delta_u$. 
We follow the method suggested in LatentCross~\cite{beutel2018latent} to incorporate time into the embedding.
We first convert $\Delta_u$ to a time-context vector $w \in \mathbb{R}^n$ using a linear layer, where we initialize $w$ to a 0-mean Gaussian. 
The projected embedding is then obtained as an element-wise product as follows:
$$\bm{\widehat{u}(t + \Delta_u)} = (1 + w) * \bm{u(t)}$$
The time-context vector $w$ essentially acts as an attention vector to scale the past user embedding to the current state. 
The context linear layer is trained during the training phase.

In Figure~\ref{fig:projection}, we show the projected embedding of user $u_2$ in our example network for different values of $\Delta_u$. 
We see that for smaller $\Delta < \Delta_2$, the projected embedding $\bm{\widehat{u}_2(t_7 + \Delta)}$ is closer to the previous embedding $\bm{u_2(t_7)}$, and it drifts farther as the value increases, showing the change in user's state. 

\cut{
 created using $\Delta_u$ which is element-wise multiplied with the embedding vector. 
This has been shown to be better than simple concatenation. 
$\mathcal{D}_U(u^{t})^-$ is used to initialize the state of the RNN, and the concatenated vector of the other terms $ \mathcal{D}_I(i^{t})^-, \Delta_u,$ and $f_j$ is used as the input to generate the new state.

Let us say that a user $u$ interacts with an item $i$ at time $t$. 
Then the projected embedding $\widehat{\mathcal{D}}_U(u^{t'})$ of $u$ at time $t' = t + \delta$ is formulated as follows:
$$ \widehat{\mathcal{D}}_U(u^{t'}) = \rho(\widetilde{\mathcal{D}}_U(u^{t'}), \widetilde{\mathcal{D}}_I(i^{t'})) $$
In other words, the projected embedding depends on the time-aware dynamic embeddings of user and item at time $t'$. 

\srijan{what is $\rho$?}
}

\cut{
We train \method\ to accurately predict the item that a user will interact with next. 
Attempting to make a prediction at time $t$ the item that a user would interact with at time $t + \delta$ can not account for events that happen in this $\delta$ duration, e.g., an item's sudden popularity. 
Instead, we train our model on real-time prediction task: `given that a user $u$ will make an interaction now, which item $i \in \mathcal{I}$ will $u$ interact with?' 
This is a practical scenario where a recommendation needs to be made in real time, e.g., when a user logs into an e-commerce platform.  
The prediction is made before the actual interaction. 
Training this way, the model can leverage information about events that have happened since the user's previous interaction. 
}


\cut{
We later show in our experiments (c.f. Section~\ref{}), a simple concatenation function works best as both of the projection operations. 
We experimented with both linear and non-linear transformations, but a simple concatenation performed the best. 
This suggests that adding ``raw'' embeddings works the best. 
}

\begin{figure}[t!]
\centering
        \includegraphics[width=0.8\columnwidth]{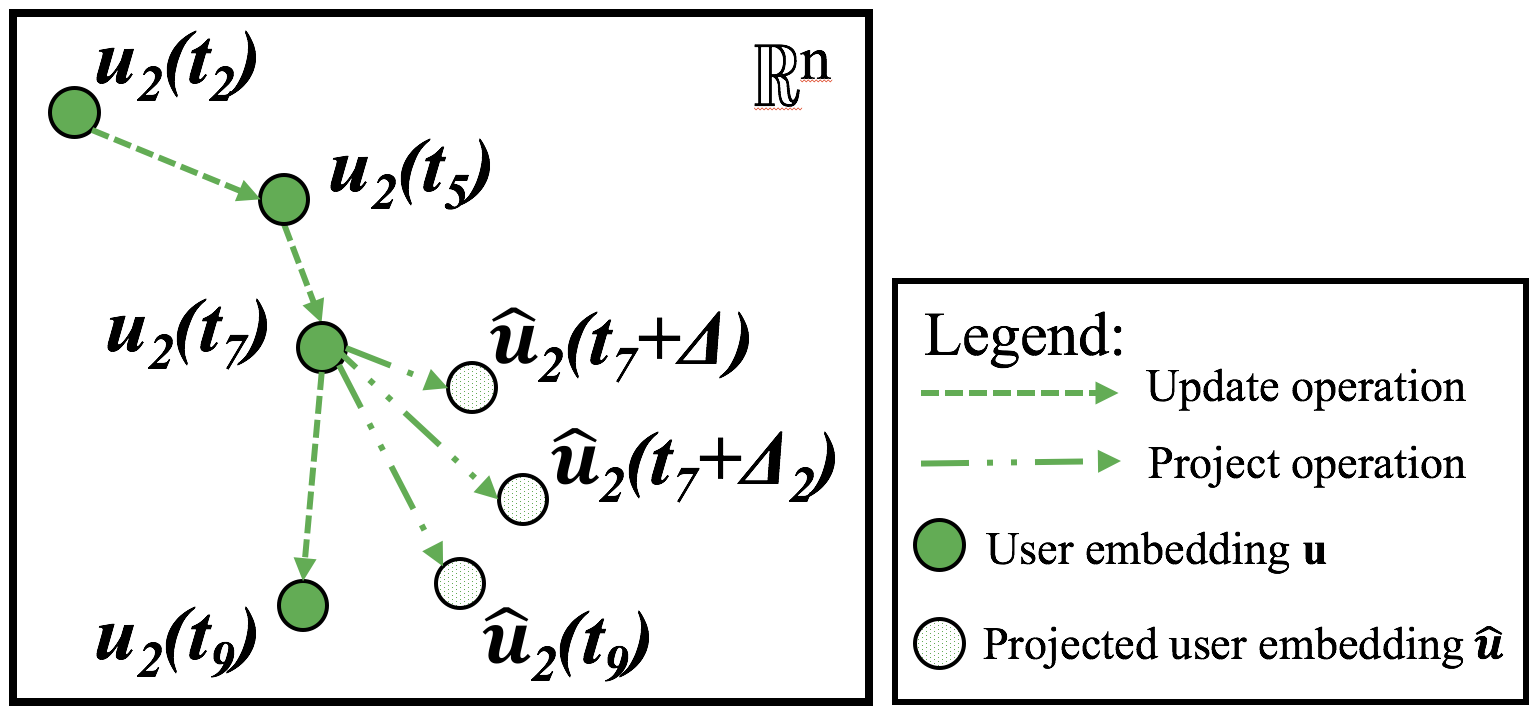}
    \caption{This figure shows the \textit{project operation}. The projected embedding of user $u_2$ in the example network is shown for different time elapsed values $\Delta$ and $\Delta_2 > \Delta$. We see that the projected embedding drifts farther as $\Delta_U$ increases. }
    \label{fig:projection}
\end{figure}

{
\footnotesize
\begin{algorithm}[t] 
    \caption{\textsc{\method\ Algorithm}} 
    \label{alg:jodie} 
	\SetKwInOut{Input}{Input}\SetKwInOut{Output}{Output} 
    \Input{Temporally sorted sequence of interactions $S: S_j = (u, i, t, f)$; \\
    Initial user embeddings $\bm{u(t)}$ $\forall u \in \mathcal{U}$; \\
	Initial item embeddings $\bm{i(t)}$ $\forall i \in \mathcal{I}$; \\
	Current model parameters: $\rho, \Theta, RNN_U, RNN_I$
    } 
    \Output{Dynamic user and item embeddings, updated model parameters} 
    \BlankLine
		$\ell \leftarrow 0$ \;
		$prev \leftarrow \{ \text{ } \} $
		\For{$j \in 1 \text{ to } |S|$} { 
		\tcc{Processing $ (u, i, t, f)$} 
		\tcc{Let $\Delta_u$ be the time since $u$'s previous interaction with any item, and $\Delta_i$ be the time since $i$'s previous interaction with any user}
		
			$\bm{\widehat{u}(t)} \leftarrow \rho(\bm{u(t^-)}, \Delta_u) $ \label{line:idxfind} \tcp*{Project user embedding} 
			$k \leftarrow prev[u]$ \tcp*{$k$ is the previous item $u$ interacted with} 
			$\bm{\widetilde{i}(t)} \leftarrow \Theta(\bm{\widehat{u}(t)}, \bm{\overline{u}}, \bm{k(t^-)}, \bm{\overline{k}} ) $ \label{line:idxfind} \tcp*{Predict item embedding} 
			$\ell \leftarrow \ell + || \bm{i(t^-)} - [\bm{\overline{i}}, \bm{\widetilde{i}(t)}] ||_2$ \label{line:idxfind} \tcp*{Calculate prediction loss} 
		
			\tcc{Update user and item embedding} 
			$ \bm{u(t)} \leftarrow RNN_U(\bm{u(t^-)}, \bm{i(t^-)}, \Delta_u, f)  $ \;
			$ \bm{i(t)} \leftarrow RNN_I(\bm{i(t^-)}, \bm{u(t^-)}, \Delta_i, f)  $ \;
			
			$prev[u] \leftarrow i$ \;

			\tcc{Add user and item embedding drift to loss} 
			$\ell \leftarrow \ell + || \bm{u(t)} - \bm{u(t^-)} ||_2 +   || \bm{i(t)} - \bm{i(t^-)} ||_2 $ \; 
		}   
		$\text{Back-propagate loss and update model parameters}$ 

		\Return{$\bm{u(t)}  \text{ }\forall u \in \mathcal{U} \text{ }  \text{ }\forall t, \bm{i(t)}  \text{ }\forall i \in \mathcal{I} \forall t$, model: $\rho, \Theta, RNN_U, RNN_I$}
\end{algorithm}
}

\subsubsection{\textbf{Predicting user-item interaction.}}
\label{sec:prediction}
The \method\ model is trained to correctly predict future user and item interactions. 
To make prediction of the user's interaction at time $t + \Delta_u$, we introduce a prediction function $\Theta:  \mathbb{R}^{(n+d)} \times \mathbb{R}^{(m+d)} \rightarrow \mathbb{R}^{(m+d)}$. 
This function takes the estimated user embedding along with its static embedding as input. 
Additionally, we also use the static and dynamic embedding of the item $i$ (the item from $u$'s last interaction at time $t$) as inputs. 
As item $i$ could interact with other users between time $t$ and $t + \Delta_u$, its embedding right before time $t + \Delta_u$, i.e., $\bm{i(t+\Delta_u^-)}$, would reflect more recent information. 
We use both the static and dynamic embeddings to predict using both the long-term and temporary properties of the user and items. 
The function is trained to output the complete (static and dynamic) predicted item embedding. 

In practice, we use a fully connected layer as the function $\Theta$. 

Note that instead of predicting a probability score of interaction between a user and a candidate item, \method\ directly predicts an item embedding. 
This is advantageous during inference time (i.e., making real-time predictions), as we would only need to do the expensive neural network forward-pass of the prediction layer once and select the item with the closest embedding using k-nearest neighbor search. This is advantageous over the standard approaches~\cite{dai2016deep,wu2017recurrent,beutel2018latent,DBLP:conf/kdd/DuDTUGS16} that generate a probability score as they need to do the forward pass $N$ times (once for each of the $N$ candidate items) to find the item with the highest probability score.

\subsubsection{\textbf{Loss function for training.}}
The entire \method\ model is trained to minimize the distance between the predicted item embedding and the ground truth item's embedding at every interaction.
Let $u$ interact with item $j$ at time $t+\Delta_u$. 
We calculate the loss as follows: 
$$
Loss =  || \Theta(\bm{\widehat{u}(t + \Delta_u)}, \text{ } \bm{\overline{u}} ,  \text{ } \bm{i(t + \Delta_u^-)}, \text{ } \bm{\overline{i}})  - [\bm{\overline{j}}, \bm{j(t + \Delta_u^-)]} ||_2 $$
$$ \text{ } + \text{ } \lambda_U ||\bm{u(t + \Delta_u)} - \bm{u(t + \Delta_u^-)} ||_2 \text{ } + \text{ }  \lambda_I || \bm{j(t + \Delta_u)} - \bm{j(t + \Delta_u^-)} ||_2 $$

The first loss term minimizes the predicted embedding error. The last two terms are added to regularize the loss and prevent the consecutive dynamic embeddings of a user and item to vary too much, respectively. 
$\lambda_U$ and $\lambda_I$ are scaling parameters to ensure the losses are in the same range. 
Note that we do not use negative sampling during training as \method\ directly outputs the embedding of the predicted item. 

Overall, Algorithm~\ref{alg:jodie} describes the process in each epoch.

\subsubsection{\textbf{Extending the loss term for user state change prediction.} }
In certain prediction tasks, such as user state change prediction, additional training labels may be present for supervision. 
In those cases, we train another prediction function $\Theta: \mathbb{R}^{(n+d)} \rightarrow \mathcal{C}$ to predict the label using the dynamic embedding of the user after an interaction.  
We calculate the cross-entropy loss for categorical labels and add the loss to the above loss function with another scaling parameter. 
We explicitly do not just train to minimize only the cross-entropy loss to avoid overfitting. 

\cut{
\subsubsection{\textbf{Training details.}}
Here we give some important algorithm training details. 

The algorithm is run for several epochs till convergence. 
In every epoch, all embeddings are first initialized, and then the interactions are processed to generate the dynamic embeddings. 
All user embeddings are initialized to the same random vector to set a common `starting point'. 
Another random vector is used to initialize all item embeddings. 
These vectors are trainable so that optimal starting vectors are obtained. 

We train the model on the first $\tau$\% of the total number of interactions. The rest are used for validation and test. 
In each epoch, we split the number of training interactions into batches of equal time duration, so that the model is regularly updated. 
The loss is backpropagated as the end of each batch to frequently update the model parameters.  
Further, we also calculate the loss during testing and backpropagate to actively train the model (i.e., in a prequential setting~\cite{benczur2018online}). 
}

\subsubsection{\textbf{Differences between \method and DeepCoevolve}}
DeepCoevolve is the closest state-of-the-art algorithm as it also trains two joint RNNs for generate dynamic embeddings. 
However, the key differences between \method and DeepCoevolve are the following: (i) \method uses a novel project function to estimate the embedding of a user and item at any time. Instead, DeepCoevolve maintains a constant embedding between two consecutive interactions of the same user/item. This makes \method truly dynamic. (ii) \method predicts the embedding of the next item that a user will interact with. In contrast, DeepCoevolve trains a function to predict the likelihood of interaction between a user and an item. This requires $N$ forward passes through the inference layer (for $N$ items) to select the item with the highest score. On the other hand, \method is scalable at inference time as it only requires one forward pass through the inference layer (see Section~\ref{sec:prediction} for details). 

Overall, in this section, we presented the \method\ algorithm to generate dynamic embeddings for users and items from a sequence of user-item interactions.

\section{t-Batch: Time-Consistent Batching Algorithm}
\label{sec:batching}

Here we explain a general batching algorithm to parallelize the training of models that learn from a sequence of user-item interactions. 

Training two mutually-recursive recurrent networks, in models such as DeepCoevolve~\cite{dai2016deep} and \method, introduces new challenges as it is fundamentally different from training a single RNN. The standard RNN models are trained through the standard Back Propagation Through Time (BPTT) mechanism. 
Methods such as RRN~\cite{beutel2018latent} and T-LSTM~\cite{zhu2017next} that use single RNNs can split users into small batches because each user's interaction sequence is treated independently (using one-hot vector representations of items). This enables parallelism but ignores the user-to-user interdependencies. 
On the other hand, in \method\ and DeepCoevolve, the two RNNs are mutually recursive to incorporate user-to-user dependencies, and as a result, users' interactions sequences are not independent. 
So, the two RNNs can not be trained independently. 
This new challenge requires new methods for efficient training. 

{
\footnotesize
\begin{algorithm}[t] 
    \caption{\textsc{\batching Algorithm}} 
    \label{alg:batching} 
	\SetKwInOut{Input}{Input}\SetKwInOut{Output}{Output} 
    \Input{Temporally sorted sequence of interactions $S: S_j = (u_j, i_j, t_j, f_j)$} 
    \Output{Sequence of batches $B: B_k = \{S_{k1}, \ldots S_{kn}\}$} 
    \BlankLine
		$last_U[u] \leftarrow 0 \forall u \in \mathcal{U}$  \tcp*{Initialize all users' last-batch index} \label{line:userlastinit} 
		$last_I[i] \leftarrow 0 \forall i \in \mathcal{I}$ \tcp*{Initialize all items' last-batch index} \label{line:itemlastinit} 
		$B_k \leftarrow \{\} \for k \in [1, |S|]$ \label{line:batchinit} \tcp*{Initialize batches} 
		$C \leftarrow 0$  \label{line:countinit} \tcp*{Initialize non-empty batch count}
	\For{$j \in 1 \text{ to } |S|$} { 
		\tcc{Processing $S_j =  (u_j, i_j, t_j, f_j)$} 
		$idx \leftarrow max (last_U[u_j], last_I[i_j] ) + 1 $  \label{line:idxfind} \tcp*{Find batch id to insert} 
		$B_{idx} \leftarrow B_{idx} \cup \{S_j\}$ \label{line:batchaddition}  \tcp*{Add interaction to correct batch}
		$last_u[u_j] \leftarrow idx$ \label{line:useridxupdate} \tcp*{Update user last-batch index}
		$last_I[i_j] \leftarrow idx$ \label{line:itemidxupdate} \tcp*{Update item last-batch index}
		$C \leftarrow max(C, idx)$ \tcp*{Update batch count $C$}
	}   
	\Return{$\{B_1, \ldots B_C\}$}
\end{algorithm}
}

First we propose two necessary conditions for any batching algorithm to work with coupled recurrent networks. 
\begin{conditions}[Co-batching conditions]
A batching algorithm for coupled recurrent networks should satisfy the following two conditions: 
\begin{enumerate}
\item Every user and every item can only appear at most once in a batch. This is required so that all interactions in the batch are completely independent so that the batch can be parallelized, and 
\item The $k^{th}$ and $k+1^{st}$ interactions of a user or item should be assigned to batches $B_i$ and $B_j$, respectively, such that $i < j$ . This is required since batches are processed sequentially, $u$'s $k^{th}$ embedding can then be used in the $k+1^{st}$ interaction. 
\end{enumerate}
\end{conditions}


The naive solution that satisfies both the above conditions is to process interactions one-at-a-time. 
However, this is very slow and can not be scaled to large number of interactions. 
This approach is used in existing methods such as Dai et al.~\cite{dai2016deep} and Zhang et al.~\cite{zhang2017learning}. 

Therefore, we propose a novel batching algorithm, called \batching, that creates large batches that can be parallelized for faster training. 
\batching is shown in Algorithm~\ref{alg:batching}. 
\batching takes as input the temporally-sorted sequence of interactions $S$ and outputs a set of batches. 
It processes one interaction at a time in increasing order of time. 
The key idea of \batching is that the $j^{th}$ interaction (say, between user $u$ and $i$) is assigned to the batch id after the largest batch id with any interaction of $u$ or $i$ yet. 
As a result, each batch only has unique user and item, and the batches can be processed in increasing order of their id, satisfying both the co-batching conditions. 

Algorithmically, the variables $last_U$ and $last_I$ store the index of the last batch in which each user and item appears, respectively (lines~\ref{line:userlastinit} and \ref{line:itemlastinit}).
A total of $|S|$ empty batches are initialized which is the maximum number of resulting batches (line ~\ref{line:batchinit}) and a count $C$ of the number of non-empty batches is maintained (line~\ref{line:countinit}).
All interactions $S_j \in S$ are then sequentially processed, where the interaction $S_j$ is added to the batch after the one in which either $u_j$ or $i_j$ appear. 
All non-empty batches are returned at the end.

\begin{example}[Running example]
In the example interaction network shown in Figure~\ref{fig:toy}, \batching results in the following batches:
\begin{itemize}
\item Batch $B_1$: $(u_1, i_1, t_1, f_1), (u_3, i_2, t_3, f_3)$
\item Batch $B_2$: $(u_2, i_1, t_2, f_2), (u_1, i_2, t_4, f_4), (u_3, i_3, t_6, f_6)$
\item Batch $B_3$: $(u_2, i_2, t_5, f_5), (u_3, i_4, t_8, f_8)$
\item Batch $B_4$: $(u_2, i_3, t_7, f_7)$
\item Batch $B_5$: $(u_2, i_4, t_8, f_8)$
\end{itemize}
We can see that a total of 5 batches, there is a 45\% decrease compared to the naive 9 batches. 
Note that in each batch, users and items appear at most once, and for the same user (and item), earlier transactions are assigned earlier batches. 
\end{example}

\begin{theorem}
\batching algorithm satisfies the co-batching conditions. 
\end{theorem}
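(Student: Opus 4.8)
The plan is to prove both co-batching conditions simultaneously by establishing a single loop invariant about the arrays $last_U$ and $last_I$, and then reading off each condition as an algebraic consequence of the index rule $idx \leftarrow \max(last_U[u_j], last_I[i_j]) + 1$. Concretely, I would prove by induction on the loop counter $j$ that after the $j$-th iteration, for every user $u$ the value $last_U[u]$ equals the index of the batch into which the most recently processed interaction of $u$ was placed (and $0$ if $u$ has not yet appeared), and symmetrically for $last_I[i]$. The base case holds by the initialization on lines~\ref{line:userlastinit} and~\ref{line:itemlastinit}. For the inductive step, processing $S_j = (u_j, i_j, t_j, f_j)$ updates only $last_U[u_j]$ and $last_I[i_j]$ to the freshly computed $idx$ (lines~\ref{line:useridxupdate} and~\ref{line:itemidxupdate}), leaving all other entries untouched, so the invariant is preserved.

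With the invariant in hand, I would establish Condition~1 (uniqueness) by contradiction. Suppose two distinct interactions $S_j$ and $S_{j'}$ with $j < j'$ are both placed in batch $B_{idx}$ and share an entity, say $u_j = u_{j'} = u$. By the invariant, immediately after $S_j$ is processed $last_U[u] = idx$, and since no later interaction before $S_{j'}$ touches $u$, we still have $last_U[u] = idx$ when $S_{j'}$ is processed. Then the index rule gives $idx' = \max(last_U[u], last_I[i_{j'}]) + 1 \geq last_U[u] + 1 = idx + 1 > idx$, contradicting $idx' = idx$. The identical argument applies when the shared entity is an item, so every user and item appears at most once per batch.

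For Condition~2 (monotonicity) I would use that the input is temporally sorted and processed in that order, so the $k$-th interaction of an entity is processed strictly before its $(k+1)$-st. Let the $k$-th interaction of user $u$ be assigned to $B_a$; by the invariant $last_U[u] = a$ right afterward and remains $a$ until the next interaction of $u$. When the $(k+1)$-st interaction of $u$ is processed, the index rule yields a batch index $b = \max(last_U[u], \cdot) + 1 \geq a + 1 > a$, so $a < b$ as required; the same holds for items. Thus both conditions follow.

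The main obstacle, such as it is, will be stating and maintaining the invariant cleanly rather than any deep difficulty: one must verify that each iteration updates exactly the two relevant entries and that the computed $idx$ strictly exceeds both $last_U[u_j]$ and $last_I[i_j]$, which is where the ``$+1$'' does all the work. The only modeling assumption I would flag explicitly is that $S$ is temporally sorted and consumed in that order, so that ``the $k$-th interaction of an entity'' is processed before its $(k+1)$-st; this is guaranteed by the algorithm's input specification and is what links the loop order to the ordering referenced in Condition~2.
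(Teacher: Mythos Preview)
Your proposal is correct and follows essentially the same approach as the paper's proof: both arguments hinge on the observation that the arrays $last_U$ and $last_I$ record the batch index of each entity's most recent placement, so the rule $idx \leftarrow \max(last_U[u_j], last_I[i_j]) + 1$ forces strict increase and hence both uniqueness and monotonicity. Your version is simply a more rigorous spelling-out of the paper's terse argument via an explicit loop invariant; the one phrase to tighten is ``since no later interaction before $S_{j'}$ touches $u$,'' which you need not assume---it suffices that $last_U[u]$ is non-decreasing across iterations, so $last_U[u] \geq idx$ when $S_{j'}$ is processed regardless.
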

\begin{proof}
Lines~\ref{line:idxfind} and \ref{line:batchaddition} ensure that every batch contains a user and an item only once. This satisfies condition 1.

The interactions are added to batches in increasing order of time. 
Consider two interactions $S_j$ and $S_k (k > j)$ in which a user $u$ appears. 
If $S_j$ is added to batch $n$, the last-batch index of $u$ is set of $n$, such that when $S_k$ is processed, the index is at least $n+1$ (as shown in line~\ref{line:userlastinit}). 
This satisfies condition 2 and completes the proof.
\end{proof}

\begin{theorem}[Complexity]
The complexity of \batching in creating the batches is $\mathcal{O}(|S|)$, i.e., linear in the number of interactions, as each interaction is seen only once. 
\end{theorem}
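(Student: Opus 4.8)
The plan is to decompose the running time of Algorithm~\ref{alg:batching} into its initialization phase and its single pass over the interaction stream, and to bound each separately. First I would observe that the initialization (lines~\ref{line:userlastinit}--\ref{line:countinit}) touches each user once, each item once, and allocates $|S|$ empty batch containers, so it costs $\mathcal{O}(|\mathcal{U}| + |\mathcal{I}| + |S|)$. Since every user and every item that appears in the stream participates in at least one interaction, we have $|\mathcal{U}| \le |S|$ and $|\mathcal{I}| \le |S|$, and therefore the initialization is already $\mathcal{O}(|S|)$.

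Next I would analyze the main \textbf{for} loop, which executes exactly $|S|$ iterations since each interaction $S_j$ is processed precisely once and never revisited. The crux is to argue that the body of the loop runs in $\mathcal{O}(1)$ time. Reading line~\ref{line:idxfind}, the value $idx$ is obtained from two table lookups into $last_U$ and $last_I$, a single $\max$, and an increment; line~\ref{line:batchaddition} inserts $S_j$ into $B_{idx}$; lines~\ref{line:useridxupdate} and~\ref{line:itemidxupdate} write back two table entries; and the batch count $C$ is updated with one more $\max$. Each of these is a constant number of elementary operations. Multiplying the constant per-iteration cost by the $|S|$ iterations gives $\mathcal{O}(|S|)$ for the loop, and adding the $\mathcal{O}(|S|)$ initialization yields the claimed total of $\mathcal{O}(|S|)$.

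The one step that warrants care --- and the only place where a naive reading could inflate the bound --- is justifying that the lookups/updates on $last_U, last_I$ and the insertion $B_{idx} \cup \{S_j\}$ are genuinely constant time. I would dispatch this by fixing the data structures explicitly: represent $last_U$ and $last_I$ as arrays indexed by the (integer) user and item identifiers, so every access in lines~\ref{line:idxfind},~\ref{line:useridxupdate},~\ref{line:itemidxupdate} is an $\mathcal{O}(1)$ array read or write; and represent each batch as an append-only list (or dynamic array), so that the set-union in line~\ref{line:batchaddition} is implemented as a single append, which is $\mathcal{O}(1)$. Note that during construction we never need membership tests inside a batch --- Condition~1 (uniqueness of each user/item per batch) is guaranteed structurally by the choice of $idx$ rather than enforced by a lookup --- so no hashing or search is required, and the constant-time claim holds without amortization subtleties. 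With these structures in place, the per-iteration cost is provably $\mathcal{O}(1)$, completing the argument that the overall complexity is linear in $|S|$.
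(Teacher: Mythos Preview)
Your proposal is correct and follows the same reasoning as the paper, which simply asserts linearity on the grounds that ``each interaction is seen only once''; you have merely made explicit the constant-time per-iteration cost and the $\mathcal{O}(|S|)$ initialization that the paper leaves implicit. There is nothing to add or correct.
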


Overall, in this section, we presented the \batching algorithm that creates batches from training data such that each batch can be parallelized. This leads to faster training of large scale interaction data. In Section~\ref{sec:exp-tbatch}, we experimentally validate that \batching\ leads to a speed-up between 7.4--8.5 $\times$ in \method\ and DeepCoevolve.

\cut{
Here we create a batching algorithm that maximally creates batches such that every user and every item can only appear at most once in a batch. 
This way all interactions in the batch are completely independent of one another and they can be parallelized. 
Moreover, the batches are time-consistent such that if a user (or item) appears in batch $b_i$ and $b_j$ such that $i < j$, then the time-stamp of interaction of the user (or item) in the two batches is strictly $b_i(u) <  b_j(u)$. 
This way updated embeddings from $u$'s $k^{th}$ interaction is used in its $k+1^{st}$ interaction and the temporal ordering is strictly maintained. 
}

\cut{ 
Now we explain the batching process. 
Thinking of the interaction network, the batches are created by maximally selecting edges from the network.
Starting from the entire network, for each node, its incident edge with the minimum timestamp is selected. 
If multiple edges are selected for a node (because of its selection from neighbors), then the one with the minimum timestamp is selected. 
Multiple edges are removed till every node only has at most one edge selected. 
The selected edges are added to the next batch, and removed from the network. 
This process is repeated till all the edges have been assigned to a batch. 
}


\section{Experiments}
\label{sec:experiments}


In this section, we experimentally validate the effectiveness of \method\ on two tasks: next interaction prediction and user state change prediction. 
We conduct experiments on three datasets each and compare with six strong baselines to show the following: 
\begin{enumerate}
\item \method\ outperforms the best performing baseline by up to 22.4\% in predicting the next interaction and up to 4.5\% in predicting label changes. 
\item  We show that \batching results in over 7.4$\times$ speed-up in the running-time of both \method\ and DeepCoevolve. 
\item  \method\ is robust in performance to the availability of training data. 
\item We show that the performance of \method\ is stable with respect to the dimensionality of the dynamic embedding.
\item Finally, we show the usefulness of \method\ as an early-warning system for label change. 
\end{enumerate}

We first explain the experimental setting and the baseline methods, and then illustrate the experimental results. 

\textbf{Experimental setting.}
We train all models by splitting the data by time, instead of splitting by user which would result in temporal inconsistency between training and test data. 
Therefore, we train all models on the first $\tau$ fraction of interactions, validate on the next $\tau_v$ fraction, and test on the next $\tau_t$ fraction of interactions.

For fair comparison, we use 128 dimensions as the dimensionality of the dynamic embedding for all algorithms and one-hot vectors for static embeddings. 
All algorithms are run for 50 epochs, and all reported numbers for all models are for the test data corresponding to the best performing validation set.

\textbf{Baselines.}
We compare \method\ with six state-of-the-art algorithms spanning three algorithmic categories: 
\begin{enumerate}
\item \textbf{Recurrent neural network algorithms:} in this category, we compare with RRN~\cite{wu2017recurrent}, LatentCross~\cite{beutel2018latent}, Time-LSTM~\cite{zhu2017next}, and standard LSTM. These algorithms are state-of-the-art in recommender systems and generate dynamic user embeddings. We use Time-LSTM-3 cell for Time-LSTM as it performs the best in the original paper~\cite{zhu2017next}, and LSTM cells in RRN and LatentCross models. As is standard, we use the one-hot vector of items as inputs to these models. 
\item \textbf{Temporal network embedding algorithms:} we compare \method\ with CTDNE~\cite{nguyen2018continuous} which is the state-of-the-art in generating embeddings from temporal networks. As it generates static embeddings, we generate new embeddings after each edge is added. We use uniform sampling of neighborhood as it performs the best in the original paper~\cite{nguyen2018continuous}. 
\item \textbf{Co-evolutionary recurrent algorithms:} here we compare with the state-of-the-art algorithm, DeepCoevolve~\cite{dai2016deep}, which has been shown to outperform other co-evolutionary point-process algorithms~\cite{trivedi2017know}. We use 10 negative samples per interaction for computational tractability. 
\end{enumerate}


{
\begin{table}
\small
\caption{\label{tab:interaction} \textbf{Future interaction prediction experiment:} Table comparing the performance of \method\ with state-of-the-art algorithms, in terms of mean reciprocal rank (MRR) and recall@10.  The {\color{blue!75}best algorithm} in each column is colored {\color{blue!75}blue} and {\color{blue!20}second best is light blue}. \method\ outperforms the baselines by up to 22.4\%.}
\begin{tabular}{c|c|c|c|c|c|c}
\hline
Method & \multicolumn{2}{c|}{Reddit} &  \multicolumn{2}{|c|}{Wikipedia} & \multicolumn{2}{|c}{LastFM} \\
& MRR & Rec@10 & MRR & Rec@10 & MRR & Rec@10 \\\hline 
LSTM~\cite{zhu2017next} & 0.355 & 0.551 & 0.329 & 0.455 & 0.062 & 0.119 \\
Time-LSTM~\cite{zhu2017next} & 0.387 & 0.573 & 0.247 & 0.342 & 0.068 & 0.137 \\
RRN~\cite{wu2017recurrent} & \cellcolor{blue!10}0.603 & \cellcolor{blue!10}0.747 & \cellcolor{blue!10}0.522 &\cellcolor{blue!10} 0.617 & 0.089 & 0.182 \\
LatentCross~\cite{beutel2018latent} & 0.421 & 0.588 & 0.424 & 0.481 & \cellcolor{blue!10}0.148 & \cellcolor{blue!10}0.227 \\
CTDNE~\cite{nguyen2018continuous} & 0.165 & 0.257 & 0.035 & 0.056 & 0.01 & 0.01 \\
DeepCoevolve~\cite{dai2016deep} & 0.171 & 0.275 & 0.515 & 0.563 & 0.019 & 0.039 \\\hline 
\method\ (proposed) &  \cellcolor{blue!25}\textbf{0.726} &  \cellcolor{blue!25}\textbf{0.852} &  \cellcolor{blue!25}\textbf{0.746} &  \cellcolor{blue!25}\textbf{0.822} &  \cellcolor{blue!25}\textbf{0.195} &  \cellcolor{blue!25}\textbf{0.307} \\\hline
\% Improvement & 12.3\% & 10.5\% & 22.4\% & 20.5\% & 4.7\% & 8.0\% \\\hline
\end{tabular}
\end{table}
}
\subsection{Experiment 1: Future interaction prediction}
\label{sec:exp1}
In this experiment, the task is to predict future interactions. 
The prediction task is: given all interactions till time $t$, and the user $u$ involved in the interaction at time $t$, which item will $u$ interact with (out of all $N$ items)?

We use three datasets in the experiments related to future interaction prediction:\\
$\bullet$ \textbf{Reddit post dataset:} this dataset consists of one month of posts made by users on subreddits~\cite{pushshift}. We selected the 1000 most active subreddits as items and the 10,000 most active users. This results in 672,447 interactions. We convert the text of the post into a feature vector representing their LIWC categories~\cite{pennebaker2001linguistic}. \\
$\bullet$ \textbf{Wikipedia edits:} this dataset is one month of edits made by edits on Wikipedia pages~\cite{wikidump}. We selected the 1000 most edited pages as items and editors who made at least 5 edits as users (a total of 8227 users). This generates 157,474 interactions. Similar to the Reddit dataset, we convert the edit text into a LIWC-feature vector. \\
$\bullet$ \textbf{LastFM song listens:} this dataset has one months of who-listens-to-which song information~\cite{lastfm}. We selected all 1000 users and the 1000 most listened songs resulting in 1293103 interactions. In this dataset, interactions do not have features. 

We select these datasets such that they vary in terms of users' repetitive behavior: in Wikipedia and Reddit, a user interacts with the same item consecutively in 79\% and 61\% interactions, respectively, while in LastFM, this happens in only 8.6\% interactions. 

\textbf{Experimentation setting.}
We use the first 80\% data to train, next 10\% to validate, and the final 10\% to test. 
We measure the performance of the algorithms in terms of the mean reciprocal rank (MRR) and recall@10---MRR is the average of the reciprocal rank and recall@10 is the fraction of interactions in which the ground truth item is ranked in the top 10. Higher values for both are better. 
For every interaction, the ranking of ground truth item is calculated with respect to all the items in the dataset. 

\textbf{Results.}
Table~\ref{tab:interaction} compares the results of \method\ with the six baseline methods. 
We observe that \method\ significantly outperforms all baselines in all datasets across both metrics on the three datasets (between 4.7\% and 22.4\%).
Interestingly, we observe that our model performs well irrespective of how repetitive users are---it achieves up to 22.4\% improvement in Wikipedia and Reddit (high repetition), and up to 8\% improvement in LastFM . This means \method\ is able to learn to balance personal preference with users' non-repetitive interaction behavior. 
Moreover, among the baselines, there is no clear winner---while RRN performs the better in Reddit and Wikipedia, LatentCross performs better in LastFM. 
As CTDNE generates static embedding, its performance is low. 

Overall, \method\ outperforms these baselines by learning efficient update, project, and predict functions.

{
\small
\begin{table}
\caption{\label{tab:churn} \textbf{User state change prediction:} Table comparing the performance in terms of AUC of \method\ with state of the art algorithms. The {\color{blue!75}best algorithm} in each column is colored {\color{blue!75}blue} and {\color{blue!20}second best is light blue}. \method\ outperforms the baselines by up to 4.5\%.}
\begin{tabular}{c|c|c|c}
\hline
Method & Reddit & Wikipedia & MOOC \\\hline
LSTM~\cite{zhu2017next} & 0.523 & 0.575 & 0.686 \\
Time-LSTM~\cite{zhu2017next} & 0.556 & 0.671 & \cellcolor{blue!10}0.711 \\
RRN~\cite{wu2017recurrent} & \cellcolor{blue!10}0.586 & \cellcolor{blue!10}0.804 & 0.558 \\
LatentCross~\cite{beutel2018latent} & 0.574 & 0.628 & 0.686 \\
DeepCoevolve~\cite{dai2016deep} & 0.577 & 0.663 & 0.671 \\\hline 
\method (proposed method)& \cellcolor{blue!25}\textbf{0.599} &\cellcolor{blue!25} \textbf{0.831} & \cellcolor{blue!25}\textbf{0.756} \\\hline
Improvement over best baseline & 1.3\% & 2.7\% & 4.5\% \\\hline
\end{tabular}
\end{table}
}

\subsection{Experiment 2: User state change prediction}
\label{sec:exp2}
In this experiment, the task is to predict if an interaction will lead to a change in user, particularly in two use cases: predicting banning of users and predicting if a student will drop-out of a course. 
Till a user is banned or drops-out, the label of the user is `0', and their last interaction has the label `1'. For users that are not banned or do not drop-out, the label is always `0'.
This is a highly challenging task because of very high imbalance in labels. 

We use three datasets for this task: \\
$\bullet$ \textbf{Reddit bans:} we augment the Reddit post dataset (from Section~\ref{sec:exp1}) with ground truth labels of banned users from Reddit.
This gives 366 true labels among 672,447 interactions (= 0.05\%).\\
$\bullet$ \textbf{Wikipedia bans:} we augment the Wikipedia edit data (from Section~\ref{sec:exp1}) with ground truth labels of banned users~\cite{wikidump}. This results in 217 positive labels among 157,474 interactions (= 0.14\%).  \\
$\bullet$ \textbf{MOOC student drop-out:} this dataset consists of actions, e.g., viewing a video, submitting answer, etc., done by students on a MOOC online course~\cite{kddcup}. This dataset consists of 7047 users interacting with 98 items (videos, answers, etc.) resulting in over 411,749 interactions. There are 4066 drop-out events (= 0.98\%).

\textbf{Experimentation setting.}
Due to sparsity of positive labels, in this experiment we train the models on the first 60\% interactions, validate on the next 20\%, and test on the last 20\% interactions. 
We evaluate the models using area under the curve metric (AUC), a standard metric in these tasks with highly imbalanced labels. 

For the baselines, we train a logistic regression classifier on the training data using the dynamic user embedding as input. 
As always, for all models, we report the test AUC for the epoch with the highest validation AUC. 

\textbf{Results.}
Table~\ref{tab:churn} compares the performance of \method\ on the three datasets with the baseline models. 
We see that \method\ outperforms the baselines by up to 2.7\% in the ban prediction task and by 4.5\% in the drop-out prediction task. 
As before, there is no clear winner among baselines---RRN performs the second best in predicting bans on Reddit and Wikipedia, while Time-LSTM is the second best in predicting dropouts. 

Thus, \method\ is highly efficient in both link prediction and label change prediction. 

\begin{figure}[t]
\centering
        \includegraphics[width=0.7\columnwidth]{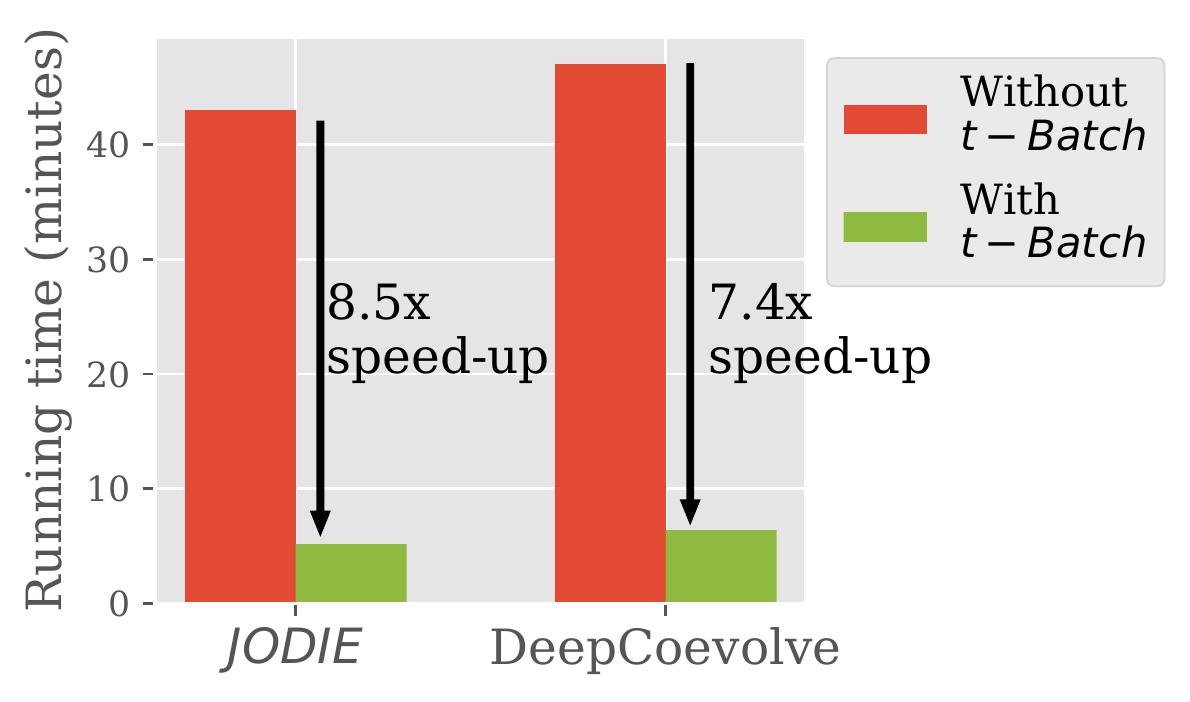} 
    \caption{Figure showing the running time (in minutes) of \method\ and DeepCoevolve, both with and without using the proposed \batching algorithm. \batching speeds both the algorithms by 8.5$\times$ and 7.4$\times$, respectively.  \label{fig:tbatch}}
\end{figure}

\begin{figure*}[t]
\centering
\subfigure[\vspace{-3mm}]{
        \includegraphics[width=0.22\textwidth]{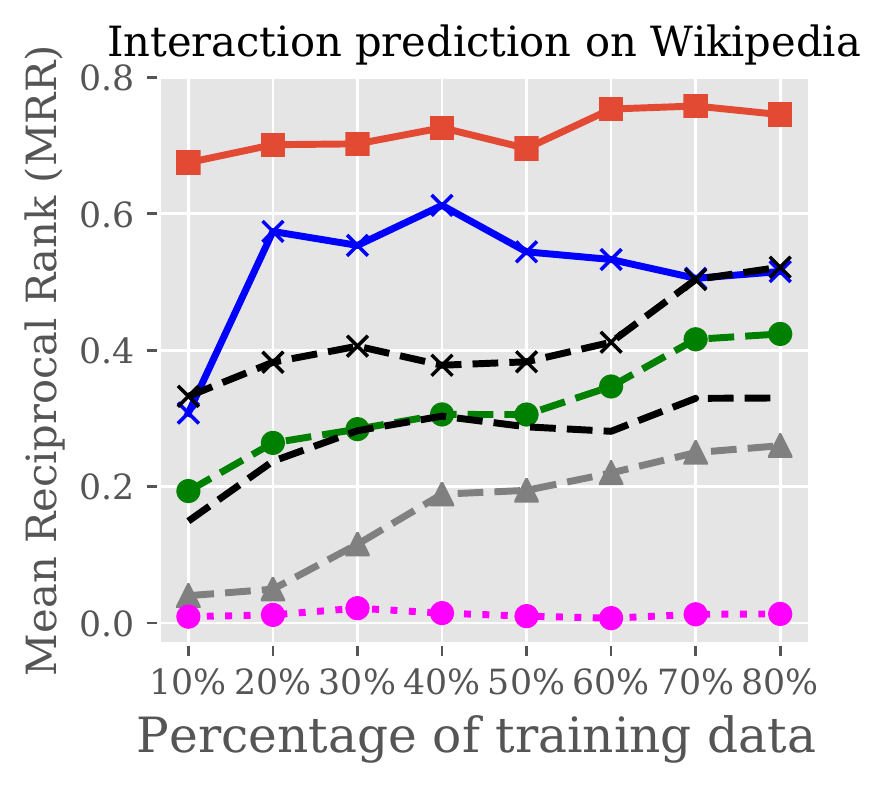}
        }
\subfigure[\vspace{-3mm}]{
        \includegraphics[width=0.21\textwidth]{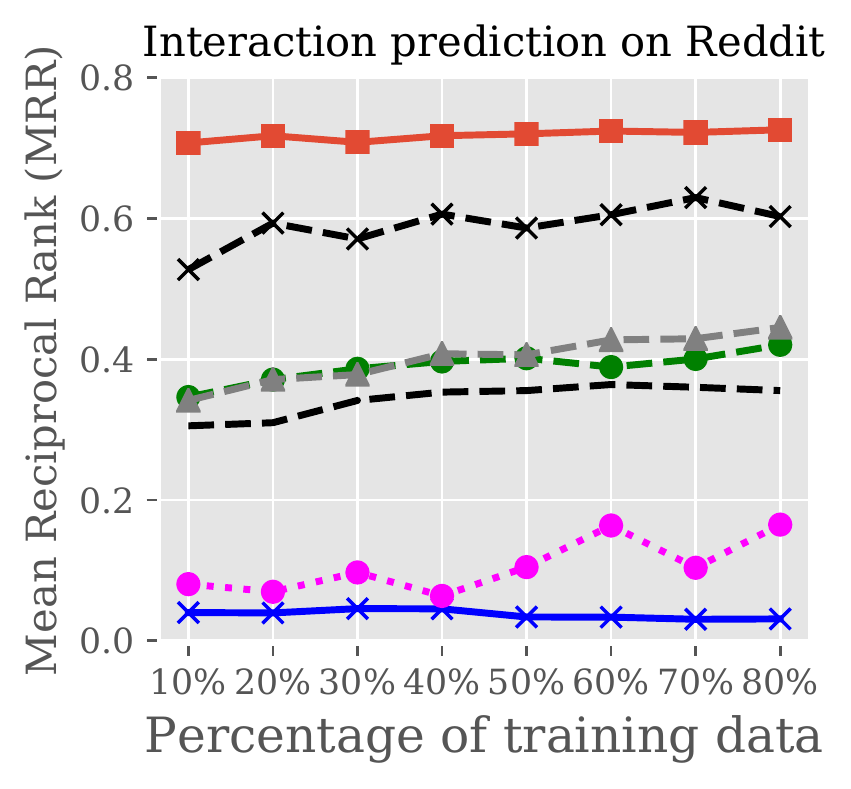}
        }
\subfigure[\vspace{-3mm}]{
                \includegraphics[width=0.22\textwidth]{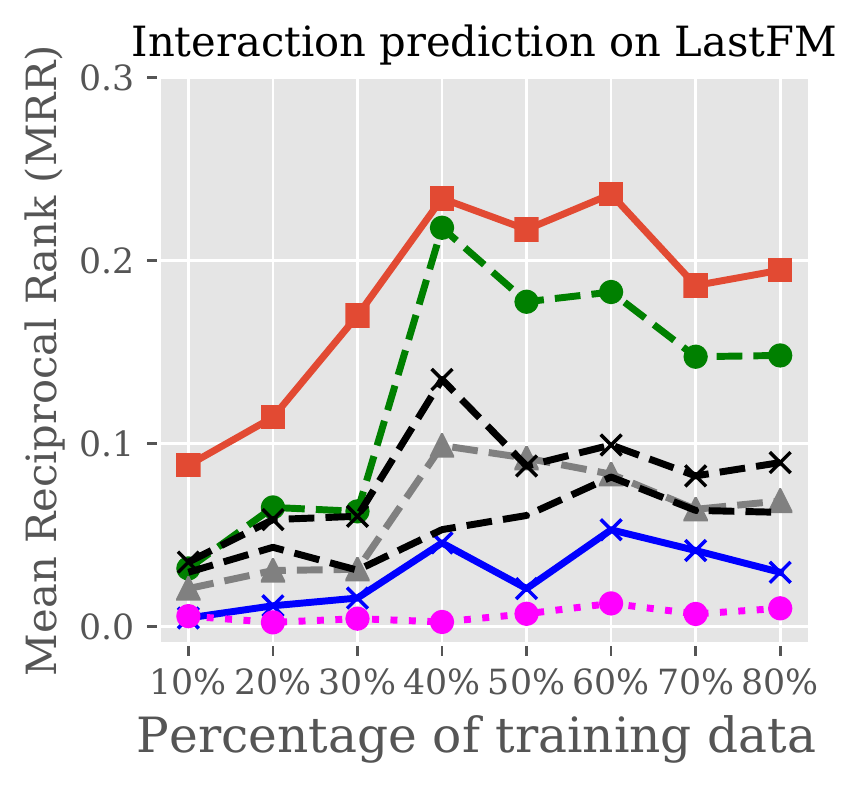}     
        }
\subfigure[\vspace{-3mm}]{
                 \includegraphics[width=0.24\textwidth]{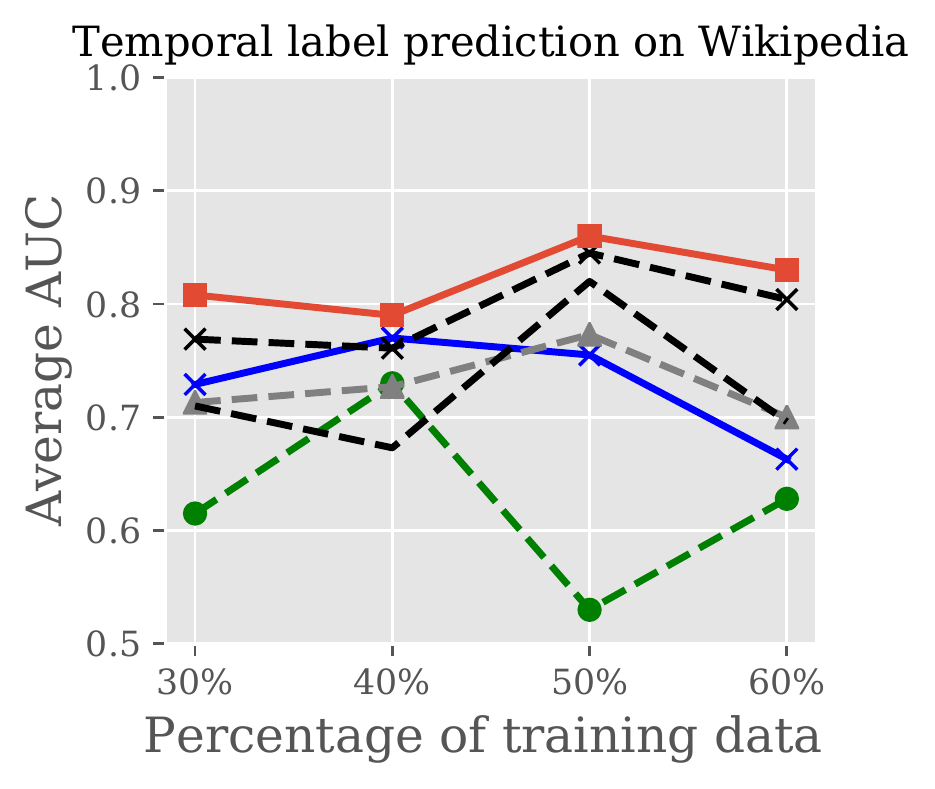}       
                }
                
                             \includegraphics[width=0.8\textwidth]{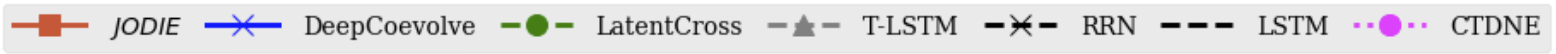} 
    \caption{\textbf{Robustness of \method :} Figures (a--c) compare the mean reciprocal rank (MRR) of \method\ with baselines on interaction prediction task, by varying the training data size. Figure (d) shows the AUC of user state change prediction task by varying the training data size. In all cases, \method\ is consistently the best by up to 33\%. \label{fig:interaction}}
\end{figure*}

\begin{table}
\caption{Table comparing the running time (in minutes) of the two coupled recurrent models, \method and DeepCoevolve, showing the effectiveness of the proposed \batching method. Experiments were conducted on the Reddit dataset. First, we observe that \method is slightly faster than DeepCoevolve and second, we observe that \batching leads to 7.4$\times$--8.5$\times$ reduction in running time.  \label{tab:runtime}}
\begin{tabular}{c|c|c}
\hline
& Without \batching & With \batching \\\hline
DeepCoevolve & 47.21 & 6.35 \\
\method & 43.53 & 5.13 \\\hline
\end{tabular}

\end{table}

\subsection{Experiment 3: Effectiveness of \batching}
\label{sec:exp-tbatch}
Here we empirically show the advantage of \batching algorithm on co-evolving recurrent models, namely our proposed \method\ and DeepCoevolve. 
Figure~\ref{fig:tbatch} shows the running time (in minutes) of one epoch of the Reddit dataset.\footnote{We ran the experiment on one NVIDIA Titan X Pascal GPUs with 12Gb of RAM at 10Gbps speed.}

We make three crucial observations. 
First, we observe that \method\ is slightly faster than DeepCoevolve. We attribute it to the fact that \method\ does not use negative sampling while training because it directly generates the embedding of the predicted item. In contrast, DeepCoevolve requires training with negative sampling. 
Second, we see that our proposed \method\ + \batching combination is 9.2$\times$ faster than the DeepCoevolve algorithm. 
Third, we observe that \batching speeds-up the running-time of both \method\ and DeepCoevolve by 8.5$\times$ and 7.4$\times$, respectively. 

Altogether, these experiments show that the proposed \batching is very effective in creating parallelizable batches from complex temporal dependencies that exist in user-item interactions.
Moreover, this also shows that \batching is general and applicable to algorithms that learn from sequence of interactions.



%
%

\subsection{Experiment 4: Robustness to training data}
\label{sec:exp-training} 
In this experiment, we check the robustness of \method\ by varying the percentage of training data and comparing the performance of the algorithms in both the tasks of interaction prediction and user state change prediction.

For interaction prediction, we vary the training data percentage from 10\% to 80\%. In each case, we take the 10\% interactions after the training data as validation and the next 10\% interactions next as testing. This is done to compare the performance on the same testing data size. 
Figure~\ref{fig:interaction}(a--c) shows the change in mean reciprocal rank (MRR) of all the algorithms on the three datasets, as the training data size is increased. 
We note that the performance of \method\ is stable as it does not vary much across the data points. 
Moreover, \method\ consistently outperforms the baseline models by a significant margin (by a maximum of 33.1\%). 

Similar is the case in user state change prediction. Here, we vary training data percent as 20\%, 40\%, and 60\%, and in each case take the following 20\% interactions as validation and the next 20\% interactions as test. 
Figure~\ref{fig:interaction}(d) shows the AUC of all the algorithms on the Wikipedia dataset. We omit the other datasets due to space constraints, which have similar results. 
Again, we observe that \method\ is stable and consistently performs the best (better by up to 3.1\%), irrespective of the training data size. 

This shows the robustness of \method\ to the amount of available training data.

\begin{figure}[t]
\centering
        \includegraphics[width=0.5\columnwidth]{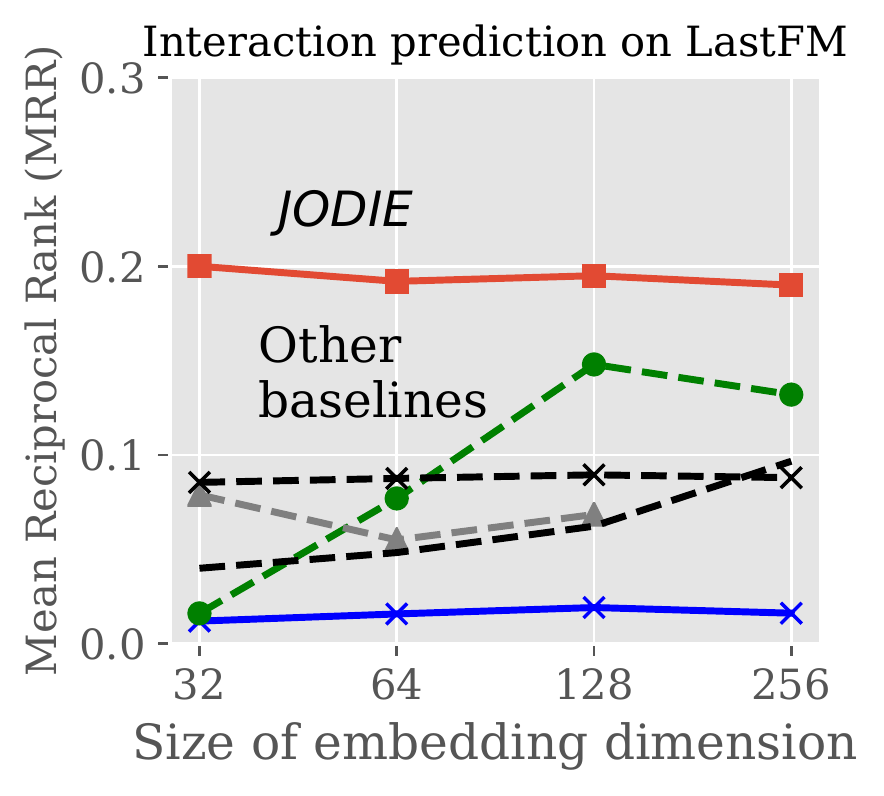} 
    \caption{\textbf{Robustness to dynamic embedding size:} Figure shows that the MRR of \method\ is stable with change in dynamic embeddign size, for the task of interaction prediction on LastFM dataset. Please refer to the legend in Figure 5. \label{fig:embed}
            }
\end{figure}

\subsection{Experiment 5: Robustness to embedding size} 
\label{sec:exp-embsize}
Finally, we check the effect of the dynamic embedding size on the predictions. 
To do this, we vary the dynamic embedding dimension from 32 to 256, and calculate the mean reciprocal rank for interaction prediction on the LastFM dataset. The effect on other datasets is similar and omitted due to space constraints. 
The resulting figure is showing in Figure~\ref{fig:embed}.
We find that the embedding dimension size has little effect on the performance of \method\ and it performs the best overall. 

\begin{figure}[t]
\centering
        \includegraphics[width=0.8\columnwidth]{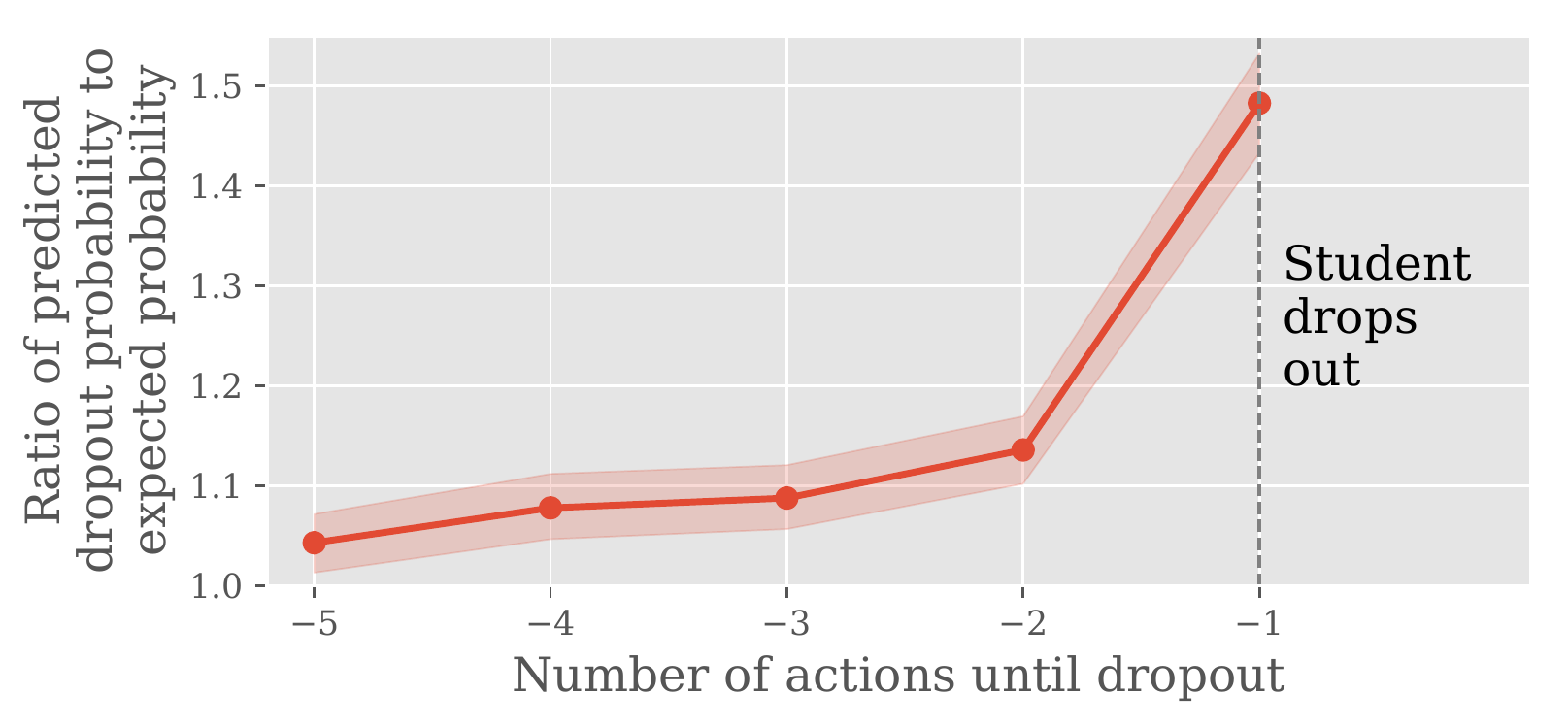} 
    \caption{\textbf{\method\ as an early-warning system:} Figure showing that as a student gets closer to dropping-out (moving right), \method\ predicts a higher dropping out probability score for them compared to other students. 
    \label{fig:ews}}
\end{figure}

\subsection{Experiment 6: \method\ as an early-warning system}
\label{sec:exp-early}
In user state change prediction tasks such as predicting student drop-out from courses and finding online malicious users, it is crucial to make the predictions early, in order to develop effective intervention strategies~\cite{}. 
For instance, if it can be predicted well in advance that a student is likely to drop a course (an `at-risk' student), then steps can be taken by the teachers to ensure continued education of the student~\cite{}.
Therefore, here we show that \method\ is effective in making early predictions for at-risk students. 

To measure this, we calculate the change in student's dropout probability predicted by \method\ as a function of the number of interactions till it drops out. 
Let us call the set of students who drop as $D$ and those who do not as $\overline{D}$. 
We plot the ratio of the predicted probability score of $d \in D$ to the predicted score for $\overline{d} \in \overline{D}$ (i.e., the expected score). 
A ratio of one means that the algorithm gives equal score to dropping-out and non-dropping-out students, while a ratio greater than one means that the algorithm gives a higher score to students that drop out compared to the students that do not. 
The average ratio is shown in Figure~\ref{fig:ews}, with 95\% confidence intervals. 
Here we only consider the interactions that occur in the test set, to prevent direct training on the drop-out interactions. 

First, we observe from Figure~\ref{fig:ews} that the ratio score is higher than one as early as five interactions prior to the student dropping-out.
Second, we observe that as the student approaches its `final' drop-out interaction, its score predicted by \method\ increases steadily and spikes strongly at its final interaction. 
Both these observations together shows that \method\ identifies early signs of dropping-out and predicts a higher score for these at-risk students. 

We make similar observation for users before they get banned on Wikipedia and Reddit, but the ratio in both these cases is close to 1, indicating that there is low early predictability of when a user will be banned. 



\vspace{2mm}
Overall, in this section, we showed that effectiveness and robustness of \method\ and \batching in two tasks, in comparison to six state-of-the-art algorithms. 
Moreover, we showed the usefulness of \method\ as an early-warning system to identify student dropouts as early as five interactions prior to dropping-out.



\section{Conclusions}
We proposed a coupled recurrent neural network model called \method\ that learns dynamic embeddings of users and items from a sequence of temporal interactions. 
The use of a novel project function, inspired by Kalman Filters, to estimate the user embedding at any time point is a key innovation of \method\ and leads to the advanced performance of \method .
We also proposed the \batching algorithm that creates parallelizable batches of training data, which results in massive speed-up in running time. 
\cut{
\method\ uses a novel project function, inspired by Kalman Filters, to estimate the user embedding at any time point. 
We also proposed the \batching algorithm that can create batches during training that enable learning in parallel. 
This makes \method\ highly scalable and 9.2$\times$ faster than existing methods. 
We evaluated \method\ on two tasks---future interaction prediction and user state change prediction---using four real-world datasets. 
We showed that \method\ outperforms six start-of-the-art algorithms in both these tasks significantly. 
}

There are several directions open for future work, such as learning embeddings of groups of users and items in temporal interactions and learning hierarchical embeddings of users and items. We will explore these directions in future work. 


%

\bibliographystyle{abbrv}
\balance
\bibliography{bib/refs,bib/reviewer_refs,bib/reviewer_papers3} 

\begin{thebibliography}{10}

\bibitem{kddcup}
Kdd cup 2015.
\newblock \url{https://biendata.com/competition/kddcup2015/data/}.
\newblock Accessed: 2018-11-05.

\bibitem{pushshift}
Reddit data dump.
\newblock \url{http://files.pushshift.io/reddit/}.
\newblock Accessed: 2018-11-05.

\bibitem{wikidump}
Wikipedia edit history dump.
\newblock \url{https://meta.wikimedia.org/wiki/Data_dumps}.
\newblock Accessed: 2018-11-05.

\bibitem{agrawal2014big}
D.~Agrawal, C.~Budak, A.~El~Abbadi, T.~Georgiou, and X.~Yan.
\newblock Big data in online social networks: user interaction analysis to
  model user behavior in social networks.
\newblock In {\em International Workshop on Databases in Networked Information
  Systems}, pages 1--16. Springer, 2014.

\bibitem{DBLP:conf/asunam/ArnouxTL17}
T.~Arnoux, L.~Tabourier, and M.~Latapy.
\newblock Combining structural and dynamic information to predict activity in
  link streams.
\newblock In {\em Proceedings of the 2017 {IEEE/ACM} International Conference
  on Advances in Social Networks Analysis and Mining 2017, Sydney, Australia,
  July 31 - August 03, 2017}, pages 935--942, 2017.

\bibitem{DBLP:journals/corr/abs-1804-01465}
T.~Arnoux, L.~Tabourier, and M.~Latapy.
\newblock Predicting interactions between individuals with structural and
  dynamical information.
\newblock {\em CoRR}, abs/1804.01465, 2018.

\bibitem{baytas2017patient}
I.~M. Baytas, C.~Xiao, X.~Zhang, F.~Wang, A.~K. Jain, and J.~Zhou.
\newblock Patient subtyping via time-aware lstm networks.
\newblock In {\em Proceedings of the 23rd ACM SIGKDD International Conference
  on Knowledge Discovery and Data Mining}, pages 65--74. ACM, 2017.

\bibitem{beutel2018latent}
A.~Beutel, P.~Covington, S.~Jain, C.~Xu, J.~Li, V.~Gatto, and E.~H. Chi.
\newblock Latent cross: Making use of context in recurrent recommender systems.
\newblock In {\em Proceedings of the Eleventh ACM International Conference on
  Web Search and Data Mining}, pages 46--54. ACM, 2018.

\bibitem{bobadilla2013recommender}
J.~Bobadilla, F.~Ortega, A.~Hernando, and A.~Guti{\'e}rrez.
\newblock Recommender systems survey.
\newblock {\em Knowledge-based systems}, 46:109--132, 2013.

\bibitem{buntain2014identifying}
C.~Buntain and J.~Golbeck.
\newblock Identifying social roles in reddit using network structure.
\newblock In {\em Proceedings of the 23rd International Conference on World
  Wide Web}, pages 615--620. ACM, 2014.

\bibitem{chaturvedi2014predicting}
S.~Chaturvedi, D.~Goldwasser, and H.~Daum{\'e}~III.
\newblock Predicting instructor's intervention in mooc forums.
\newblock In {\em Proceedings of the 52nd Annual Meeting of the Association for
  Computational Linguistics (Volume 1: Long Papers)}, volume~1, pages
  1501--1511, 2014.

\bibitem{cheng2017anyone}
J.~Cheng, M.~Bernstein, C.~Danescu-Niculescu-Mizil, and J.~Leskovec.
\newblock Anyone can become a troll: Causes of trolling behavior in online
  discussions.
\newblock In {\em CSCW: proceedings of the Conference on Computer-Supported
  Cooperative Work. Conference on Computer-Supported Cooperative Work}, volume
  2017, page 1217. NIH Public Access, 2017.

\bibitem{dai2016deep}
H.~Dai, Y.~Wang, R.~Trivedi, and L.~Song.
\newblock Deep coevolutionary network: Embedding user and item features for
  recommendation.
\newblock {\em arXiv preprint arXiv:1609.03675}, 2016.

\bibitem{DBLP:conf/kdd/DuDTUGS16}
N.~Du, H.~Dai, R.~Trivedi, U.~Upadhyay, M.~Gomez{-}Rodriguez, and L.~Song.
\newblock Recurrent marked temporal point processes: Embedding event history to
  vector.
\newblock In {\em Proceedings of the 22nd {ACM} {SIGKDD} International
  Conference on Knowledge Discovery and Data Mining, San Francisco, CA, USA,
  August 13-17, 2016}, pages 1555--1564, 2016.

\bibitem{DBLP:conf/nips/FarajtabarWGLZS15}
M.~Farajtabar, Y.~Wang, M.~Gomez{-}Rodriguez, S.~Li, H.~Zha, and L.~Song.
\newblock {COEVOLVE:} {A} joint point process model for information diffusion
  and network co-evolution.
\newblock In {\em Advances in Neural Information Processing Systems 28: Annual
  Conference on Neural Information Processing Systems 2015, December 7-12,
  2015, Montreal, Quebec, Canada}, pages 1954--1962, 2015.

\bibitem{ferraz2015rsc}
A.~Ferraz~Costa, Y.~Yamaguchi, A.~Juci Machado~Traina, C.~Traina~Jr, and
  C.~Faloutsos.
\newblock Rsc: Mining and modeling temporal activity in social media.
\newblock In {\em Proceedings of the 21th ACM SIGKDD International Conference
  on Knowledge Discovery and Data Mining}, pages 269--278. ACM, 2015.

\bibitem{DBLP:journals/kbs/GoyalF18}
P.~Goyal and E.~Ferrara.
\newblock Graph embedding techniques, applications, and performance: {A}
  survey.
\newblock {\em Knowl.-Based Syst.}, 151:78--94, 2018.

\bibitem{goyal2018dyngem}
P.~Goyal, N.~Kamra, X.~He, and Y.~Liu.
\newblock Dyngem: Deep embedding method for dynamic graphs.
\newblock {\em arXiv preprint arXiv:1805.11273}, 2018.

\bibitem{grover2016node2vec}
A.~Grover and J.~Leskovec.
\newblock node2vec: Scalable feature learning for networks.
\newblock In {\em Proceedings of the 22nd ACM SIGKDD international conference
  on Knowledge discovery and data mining}, pages 855--864. ACM, 2016.

\bibitem{DBLP:journals/debu/HamiltonYL17}
W.~L. Hamilton, R.~Ying, and J.~Leskovec.
\newblock Representation learning on graphs: Methods and applications.
\newblock {\em {IEEE} Data Eng. Bull.}, 40(3):52--74, 2017.

\bibitem{lastfm}
B.~Hidasi and D.~Tikk.
\newblock Fast als-based tensor factorization for context-aware recommendation
  from implicit feedback.
\newblock In {\em Joint European Conference on Machine Learning and Knowledge
  Discovery in Databases}, pages 67--82. Springer, 2012.

\bibitem{iba2010analyzing}
T.~Iba, K.~Nemoto, B.~Peters, and P.~A. Gloor.
\newblock Analyzing the creative editing behavior of wikipedia editors: Through
  dynamic social network analysis.
\newblock {\em Procedia-Social and Behavioral Sciences}, 2(4):6441--6456, 2010.

\bibitem{johnson2016mimic}
A.~E. Johnson, T.~J. Pollard, L.~Shen, H.~L. Li-wei, M.~Feng, M.~Ghassemi,
  B.~Moody, P.~Szolovits, L.~A. Celi, and R.~G. Mark.
\newblock Mimic-iii, a freely accessible critical care database.
\newblock {\em Scientific data}, 3:160035, 2016.

\bibitem{julier1997new}
S.~J. Julier and J.~K. Uhlmann.
\newblock New extension of the kalman filter to nonlinear systems.
\newblock In {\em Signal processing, sensor fusion, and target recognition VI},
  volume 3068, pages 182--194. International Society for Optics and Photonics,
  1997.

\bibitem{DBLP:journals/corr/abs-1711-10967}
R.~R. Junuthula, M.~Haghdan, K.~S. Xu, and V.~K. Devabhaktuni.
\newblock The block point process model for continuous-time event-based dynamic
  networks.
\newblock {\em CoRR}, abs/1711.10967, 2017.

\bibitem{DBLP:conf/icwsm/Junuthula0D18}
R.~R. Junuthula, K.~S. Xu, and V.~K. Devabhaktuni.
\newblock Leveraging friendship networks for dynamic link prediction in social
  interaction networks.
\newblock In {\em Proceedings of the Twelfth International Conference on Web
  and Social Media, {ICWSM} 2018, Stanford, California, USA, June 25-28,
  2018.}, pages 628--631, 2018.

\bibitem{kloft2014predicting}
M.~Kloft, F.~Stiehler, Z.~Zheng, and N.~Pinkwart.
\newblock Predicting mooc dropout over weeks using machine learning methods.
\newblock In {\em Proceedings of the EMNLP 2014 Workshop on Analysis of Large
  Scale Social Interaction in MOOCs}, pages 60--65, 2014.

\bibitem{kumar2015vews}
S.~Kumar, F.~Spezzano, and V.~Subrahmanian.
\newblock Vews: A wikipedia vandal early warning system.
\newblock In {\em Proceedings of the 21th ACM SIGKDD international conference
  on knowledge discovery and data mining}, pages 607--616. ACM, 2015.

\bibitem{DBLP:conf/cikm/LiDHTCL17}
J.~Li, H.~Dani, X.~Hu, J.~Tang, Y.~Chang, and H.~Liu.
\newblock Attributed network embedding for learning in a dynamic environment.
\newblock In {\em Proceedings of the 2017 {ACM} on Conference on Information
  and Knowledge Management, {CIKM} 2017, Singapore, November 06 - 10, 2017},
  pages 387--396, 2017.

\bibitem{DBLP:journals/access/LiZYZY18}
T.~Li, J.~Zhang, P.~S. Yu, Y.~Zhang, and Y.~Yan.
\newblock Deep dynamic network embedding for link prediction.
\newblock {\em {IEEE} Access}, 6:29219--29230, 2018.

\bibitem{DBLP:conf/sdm/LiDLLGZ14}
X.~Li, N.~Du, H.~Li, K.~Li, J.~Gao, and A.~Zhang.
\newblock A deep learning approach to link prediction in dynamic networks.
\newblock In {\em Proceedings of the 2014 {SIAM} International Conference on
  Data Mining, Philadelphia, Pennsylvania, USA, April 24-26, 2014}, pages
  289--297, 2014.

\bibitem{liyanagunawardena2013moocs}
T.~R. Liyanagunawardena, A.~A. Adams, and S.~A. Williams.
\newblock Moocs: A systematic study of the published literature 2008-2012.
\newblock {\em The International Review of Research in Open and Distributed
  Learning}, 14(3):202--227, 2013.

\bibitem{DBLP:journals/corr/abs-1810-10627}
Y.~Ma, Z.~Guo, Z.~Ren, Y.~E. Zhao, J.~Tang, and D.~Yin.
\newblock Dynamic graph neural networks.
\newblock {\em CoRR}, abs/1810.10627, 2018.

\bibitem{nguyen2018continuous}
G.~H. Nguyen, J.~B. Lee, R.~A. Rossi, N.~K. Ahmed, E.~Koh, and S.~Kim.
\newblock Continuous-time dynamic network embeddings.
\newblock In {\em 3rd International Workshop on Learning Representations for
  Big Networks (WWW BigNet)}, 2018.

\bibitem{DBLP:conf/recsys/PalovicsBKKF14}
R.~P{\'{a}}lovics, A.~A. Bencz{\'{u}}r, L.~Kocsis, T.~Kiss, and E.~Frig{\'{o}}.
\newblock Exploiting temporal influence in online recommendation.
\newblock In {\em Eighth {ACM} Conference on Recommender Systems, RecSys '14,
  Foster City, Silicon Valley, CA, {USA} - October 06 - 10, 2014}, pages
  273--280, 2014.

\bibitem{pennebaker2001linguistic}
J.~W. Pennebaker, M.~E. Francis, and R.~J. Booth.
\newblock Linguistic inquiry and word count: Liwc 2001.
\newblock {\em Mahway: Lawrence Erlbaum Associates}, 71(2001):2001, 2001.

\bibitem{perozzi2014deepwalk}
B.~Perozzi, R.~Al-Rfou, and S.~Skiena.
\newblock Deepwalk: Online learning of social representations.
\newblock In {\em Proceedings of the 20th ACM SIGKDD international conference
  on Knowledge discovery and data mining}, pages 701--710. ACM, 2014.

\bibitem{DBLP:conf/wsdm/QiuDMLWT18}
J.~Qiu, Y.~Dong, H.~Ma, J.~Li, K.~Wang, and J.~Tang.
\newblock Network embedding as matrix factorization: Unifying deepwalk, line,
  pte, and node2vec.
\newblock In {\em Proceedings of the Eleventh {ACM} International Conference on
  Web Search and Data Mining, {WSDM} 2018, Marina Del Rey, CA, USA, February
  5-9, 2018}, pages 459--467, 2018.

\bibitem{raghavan2014modeling}
V.~Raghavan, G.~Ver~Steeg, A.~Galstyan, and A.~G. Tartakovsky.
\newblock Modeling temporal activity patterns in dynamic social networks.
\newblock {\em IEEE Transactions on Computational Social Systems},
  1(1):89--107, 2014.

\bibitem{DBLP:journals/corr/abs-1804-05755}
M.~Rahman, T.~K. Saha, M.~A. Hasan, K.~S. Xu, and C.~K. Reddy.
\newblock Dylink2vec: Effective feature representation for link prediction in
  dynamic networks.
\newblock {\em CoRR}, abs/1804.05755, 2018.

\bibitem{DBLP:journals/corr/abs-1710-00818}
S.~Sajadmanesh, J.~Zhang, and H.~R. Rabiee.
\newblock Continuous-time relationship prediction in dynamic heterogeneous
  information networks.
\newblock {\em CoRR}, abs/1710.00818, 2017.

\bibitem{DBLP:conf/cosn/SedhainSXKTC13}
S.~Sedhain, S.~Sanner, L.~Xie, R.~Kidd, K.~Tran, and P.~Christen.
\newblock Social affinity filtering: recommendation through fine-grained
  analysis of user interactions and activities.
\newblock In {\em Conference on Online Social Networks, COSN'13, Boston, MA,
  USA, October 7-8, 2013}, pages 51--62, 2013.

\bibitem{trivedi2017know}
R.~Trivedi, H.~Dai, Y.~Wang, and L.~Song.
\newblock Know-evolve: Deep temporal reasoning for dynamic knowledge graphs.
\newblock In {\em International Conference on Machine Learning}, pages
  3462--3471, 2017.

\bibitem{trivedi2018representation}
R.~Trivedi, M.~Farajtbar, P.~Biswal, and H.~Zha.
\newblock Representation learning over dynamic graphs.
\newblock {\em arXiv preprint arXiv:1803.04051}, 2018.

\bibitem{walker2015complex}
P.~B. Walker, S.~G. Fooshee, and I.~Davidson.
\newblock Complex interactions in social and event network analysis.
\newblock In {\em International Conference on Social Computing,
  Behavioral-Cultural Modeling, and Prediction}, pages 440--445. Springer,
  2015.

\bibitem{wang2016coevolutionary}
Y.~Wang, N.~Du, R.~Trivedi, and L.~Song.
\newblock Coevolutionary latent feature processes for continuous-time user-item
  interactions.
\newblock In {\em Advances in Neural Information Processing Systems}, pages
  4547--4555, 2016.

\bibitem{wu2017recurrent}
C.-Y. Wu, A.~Ahmed, A.~Beutel, A.~J. Smola, and H.~Jing.
\newblock Recurrent recommender networks.
\newblock In {\em Proceedings of the Tenth ACM International Conference on Web
  Search and Data Mining}, pages 495--503. ACM, 2017.

\bibitem{yang2013turn}
D.~Yang, T.~Sinha, D.~Adamson, and C.~P. Ros{\'e}.
\newblock Turn on, tune in, drop out: Anticipating student dropouts in massive
  open online courses.
\newblock In {\em Proceedings of the 2013 NIPS Data-driven education workshop},
  volume~11, page~14, 2013.

\bibitem{zhang2017deep}
S.~Zhang, L.~Yao, and A.~Sun.
\newblock Deep learning based recommender system: A survey and new
  perspectives.
\newblock {\em arXiv preprint arXiv:1707.07435}, 2017.

\bibitem{zhang2017learning}
Y.~Zhang, Y.~Xiong, X.~Kong, and Y.~Zhu.
\newblock Learning node embeddings in interaction graphs.
\newblock In {\em Proceedings of the 2017 ACM on Conference on Information and
  Knowledge Management}, pages 397--406. ACM, 2017.

\bibitem{zhou2018dynamic}
L.-k. Zhou, Y.~Yang, X.~Ren, F.~Wu, and Y.~Zhuang.
\newblock Dynamic network embedding by modeling triadic closure process.
\newblock In {\em AAAI}, 2018.

\bibitem{zhu2016scalable}
L.~Zhu, D.~Guo, J.~Yin, G.~Ver~Steeg, and A.~Galstyan.
\newblock Scalable temporal latent space inference for link prediction in
  dynamic social networks.
\newblock {\em IEEE Transactions on Knowledge and Data Engineering},
  28(10):2765--2777, 2016.

\bibitem{zhu2017next}
Y.~Zhu, H.~Li, Y.~Liao, B.~Wang, Z.~Guan, H.~Liu, and D.~Cai.
\newblock What to do next: modeling user behaviors by time-lstm.
\newblock In {\em Proceedings of the Twenty-Sixth International Joint
  Conference on Artificial Intelligence, IJCAI-17}, pages 3602--3608, 2017.

\end{thebibliography}

\end{document}